\newtheorem{theorem}{Theorem}
\newtheorem{lemma}{Lemma}
\newtheorem{corollary}{Corollary}
\newtheorem{proposition}{Proposition}
\newtheorem{remark}{Remark}
\numberwithin{theorem}{section}
\numberwithin{lemma}{section}
\numberwithin{corollary}{section}
\numberwithin{proposition}{section}
\numberwithin{remark}{section}
\newcommand{\red}{\textcolor{black}}
\title{{\Large {\bf A discontinuity of the energy of quantum walk in impurities }}
\author{
{\small Kenta Higuchi }\\
{\scriptsize Department of Mathematical Sciences, Ritsumeikan University,}\\
{\scriptsize Noji-Higashi, Kusatsu, 525-8577, Japan.
}\\
{\small Takashi Komatsu}\\
{\scriptsize Math. Research Institute Calc for Industry
Minami, }\\  
{\scriptsize Hiroshima, 732-0816, Japan. }\\
{\small Norio Konno}\\
{\scriptsize Department of Applied Mathematics, Yokohama National University, }\\  
{\scriptsize Hodogaya, Yokohama 240-8501, Japan.}\\
{\small Hisashi Morioka}\\
{\scriptsize Graduate School of Science and Engineering, Ehime University, }\\  
{\scriptsize Bunkyo-cho 3, Matsuyama, Ehime, 790-8577, Japan.}\\
{\small Etsuo Segawa}\\
{\scriptsize Graduate School of Environment Information Sciences, Yokohama National University, } \\
{\scriptsize Hodogaya, Yokohama 240-8501, Japan.}}
}
\date{\empty }
\begin{document}
\maketitle

\par\noindent
\begin{small}
\par\noindent
{\bf Abstract}. 
We consider the discrete-time quantum walk whose local dynamics is denoted by $C$ at the perturbed region $\{0,1,\dots,M-1\}$ and free  at the other positions. We obtain the stationary state with a bounded initial state. The initial state is set so that the perturbed region receives the inflow $\omega^n$ at time $n$ $(|\omega|=1)$. From this expression, we compute the scattering on the surface of $-1$ and $M$ and also compute the quantity how quantum walker accumulates in the perturbed region; namely the energ of the quantum walk, in the long time limit. We find a discontinuity of the energy with respect to the frequency of the inflow.

\footnote[0]{
{\it Keywords: } 
Quantum walk, Scattering theory, Scattering matrix 
}
\end{small}

\section{Introduction}
There is no doubt that studies on scattering theory is one of the interesting topic of the Schr{\"o}dinger equation.  
Recently, it is revealed that the scatterings of some fundamental stationary Schr{\"o}dinger equations  on the real line with not only delta potentials~\cite{T,HKSS1,MMOS} but also continuous potential~\cite{H} can be recovered by discrete-time quantum walks.  
These induced quantum walks are given by the following setting: the non-trivial quantum coins are assigned to some vertices in a finite 
region on the one-dimensional lattice as the impurities and the free-quantum coins are assigned at the other vertices. The initial state is given so that quantum walkers inflows into the perturbed region at every time step.  
It is shown that the scattering matrix of the quantum walk on the one-dimensional lattice can be explicitly described by using a path counting in \cite{KKMS} and this path counting method can be described by a discrete analogue of the Feynmann path integral~\cite{H}.  
There are some studies for the scattering theory of quantum walks under slightly general settings and related topics \cite{Su,RST1,RST2,Morioka,MoSe,Ti,KKMS2}.

Such a setting is the special setting of \cite{FelHil1,HS} in that  the regions where a quantum walker moves freely coincide with tails in \cite{FelHil1,HS}, and the perturbed region can be regarded as a finite and connected graph in \cite{FelHil1,HS}. 
The properties of not only the scattering on the surface of the internal graph but also the stationary state in the internal graph for the Szegedy walk are characterized by \cite{MHS} with a constant inflow from the tails. 

By \cite{HS}, this quantum walk converges to a stationary state.  
So let $\vec{\varphi}(\cdot): \mathbb{Z}\to \mathbb{C}^2$ be the stationary state of the quantum walk on $\mathbb{Z}$. The perturbed region is $[M]:=\{0,1,\dots,M-1\}$ and we assign the quantum coin 
\[C=\begin{bmatrix}a & b \\ c & d \end{bmatrix}\]
to each vertex in $[M]$. 
The inflow into the perturbed region at time $n$ is expressed by $\omega^n$ $(|\omega|=1)$. 
In this paper, we compute (1) the scattering on the surface of the perturbed region $[M]$ in the one-dimensional lattice; (2) the energy of the quantum walk. Here the energy of quantum walk is defined by 
\[ \mathcal{E}_M(\omega)=\sum_{x=0}^{M-1}||\vec{\varphi}(x)||^2_{\mathbb{C}^2}. \]
This is the quantity how much quantum walkers accumulate to the perturbed region $[M]$ in the long time limit. 
We obtain a necessary and sufficient condition for the perfect transmitting, and also obtain the energy. As a consequence of our result on the energy, we observe a discontinuity of the energy with respect to the frequency of the inflow. Moreover our result implies that the condition for $\theta(\omega)\in \mathbb{N}$ is equivalent to the condition for the perfect transmitting. 
Then we obtain that the situation of the perfect transmitting not only releases quantum walker to the opposite outside but also  accumulates quantum walkers in the perturbed region.  
Note that since this quantum walk can be converted to a quantum walk with absorption walls, the problem is reduced to analysis on a finite matrix $E_M$ which is obtained by picking up from the total unitary time evolution operator with respect to the perturbed region $[M]$. See \cite{Parker} for a precise spectral results on $E_{M}$. 

This paper is organized as follows. 
In Section~2, we explain the setting of this model and give some related works. 
In Section~3, an explicit expression for the stationary state is computed using the Chebyshev polynomials. From this expression, we obtain the transmitting and reflecting rates and a neseccory and sufficient condition for the perfect transmitting.  We also gives the energy in the perturbed region. 
In Section~4, we estimate the asymptotics of the energy to see  the discontinuity with respect to the incident inflow. 

\section{The setting of our quantum walk}
The total Hilbert space is denoted by $\mathcal{H}:=\ell^2(\mathbb{Z};\mathbb{C}^2)\cong \ell^2(A)$. 
Here $A$ is the set of arcs of one-dimensional lattice whose elements are labeled by $\{(x;R),(x;L) \;|\; x\in \mathbb{Z}\}$, 
where $(x;R)$ and $(x;L)$ represents the arcs ``from $x-1$ to $x$", and ``from $x+1$ to $x$", respectively. 
We assign a $2\times 2$ unitary matrix to each $x\in \mathbb{Z}$ so called local quantum coin 
	\[ C_x=\begin{bmatrix} a_x & b_x \\ c_x & d_x \end{bmatrix}. \]
Putting $|L\rangle := [1,0]^\top$, $|R\rangle:=[0,1]^\top$ and $\langle L|=[1,0]$, $\langle R|=[0,1]$, 
we define the following matrix valued weights associated with moving to left and right from $x$ by 
	\[ P_x=|L\rangle \langle L| C_x,\;\;Q_x=|R\rangle \langle R| C_x, \]
respectively. Then the time evolution operator on $\ell^2(\mathbb{Z};\mathbb{C}^2)$ is described by 
	\[ (U\psi)(x)=P_{x+1}\psi(x+1)+Q_{x-1}\psi(x-1) \]
for any $\psi\in \ell^2(\mathbb{Z};\mathbb{C}^2)$. 
Its equivalent expression on $\ell^2(A)$ is described by 
	\begin{align}\label{def:U'} 
        (U'\phi)(x;L) &= a_{x+1}\phi(x+1;L)+b_{x+1}\phi(x+1;R), \notag \\
        (U'\phi)(x;R) &= c_{x-1}\phi(x-1;L)+d_{x-1}\phi(x-1;R) 
        \end{align}
for any $\psi\in \ell^2(A)$. 
We call $a_x$ and $d_x$ the transmitting amplitudes, and 
$b_x$ and $c_x$ the reflection amplitudes at $x$, respectively
\footnote{If we put $a_x=d_x=1$ and $b_x=c_x=\sqrt{-1}=i$, then the primitive form of QW in \cite{FH} is reproduced.}.
Remark that $U$ and $U'$ are unitarily equivalent such that 
letting $\eta:\ell^2(\mathbb{Z};\mathbb{C}^2)\to \ell^2(A)$ be 
	\begin{align*}
        (\eta \psi)(x;R)= \langle R|\psi \rangle,\;\;(\eta \psi)(x;L)= \langle L|\psi \rangle
	\end{align*} 
then we have $U=\eta^{-1}U'\eta$. 
The free quantum walk is the quantum walk where all local quantum coins are described by the identity matrix i.e.
$$
(U_0 \psi )(x)= \left[ \begin{array}{cc} 1 & 0 \\ 0 & 0 \end{array} \right] \psi (x+1) + \left[ \begin{array}{cc} 0 & 0 \\ 0 & 1 \end{array} \right] \psi (x-1).
$$
Then the walker runs through one-dimensional lattices without any reflections in the free case. 

In this paper we set ``impurities" on 
\[ \Gamma_M:=\{0,1,\dots,M-1\} \] 
in the free quantum walk on one-dimensional lattice; that is, 
	\begin{equation}\label{inpurity} 
        C_x=\begin{cases} \begin{bmatrix}a & b \\ c & d\end{bmatrix} & \text{: $x\in \Gamma_M$,}\\ \\ I_2 & \text{: $x\notin \Gamma_M$.} \end{cases} 
        \end{equation}
We consider the initial state $\Psi_0$ as follows. 
\begin{equation*}
    \Psi_0(x) =  
    \begin{cases}
    e^{i\xi x}|R\rangle & \text{: $x\leq 0$;} \\
    0 & \text{: otherwise,}
    \end{cases} 
\end{equation*}
where $\xi\in \mathbb{R}/2\pi\mathbb{Z}$. 
Note that this initial state belongs to no longer $\ell^2$ category. 
The region $\Gamma_M$ is obtain a time dependent  inflow $e^{-i\xi n}$ from the negative outside. On the other hand, if a quantum walker goes out side of $\Gamma_M$, it never come back again to $\Gamma_M$. We can regard such a quantum walker as an outflow from $\Gamma_M$. Roughly speaking, in the long time limit, the inflow and outflow are balanced and obtain the stationary state with some modification. 
Indeed the following statement holds. 
\begin{proposition}\cite{HS}
\begin{enumerate}
\item This quantum walk converges to a stationary state in the following meaning: 
\[ \exists \lim_{n\to \infty} e^{i(n+1)\xi}\Psi_n(x)=:\Phi_\infty(x).\] 
\item This stationary state is a generalized eigenfunction satisfying 
\[ U\Phi_\infty=e^{-i\xi}\Phi_\infty. \]
\end{enumerate}
\end{proposition}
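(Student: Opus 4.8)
The plan is to collapse the full (and non-$\ell^2$) dynamics on $\mathbb{Z}$ onto a finite-dimensional dissipative recursion, the viewpoint already announced in the introduction through the matrix $E_M$. Write $\lambda:=e^{-i\xi}$, let $\chi_M$ be the orthogonal projection of $\ell^2(A)$ onto the $2M$-dimensional subspace spanned by $\{(x;R),(x;L):x\in\Gamma_M\}$, and put $E_M:=\chi_M U\chi_M$ (the finite matrix of the introduction, up to the harmless inclusion of the boundary arcs). Two elementary observations drive the reduction. First, because $U$ acts freely off $\Gamma_M$, the only arcs outside $\mathrm{ran}\,\chi_M$ that can feed into $\mathrm{ran}\,\chi_M$ in one step of \eqref{def:U'} are $(-1;R)$ (into $(0;R)$) and $(M;L)$ (into $(M-1;L)$). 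Second, the free evolution on the two tails is explicit: the incident plane wave is stationary up to phase, $\Psi_n(-1;R)=e^{-i(n+1)\xi}$ for all $n\ge0$, while nothing ever propagates leftwards on the right tail, so $\Psi_n(M;L)=0$ for all $n\ge0$. Setting $w_n:=\chi_M\Psi_n$, writing $\Psi_n=w_n+(1-\chi_M)\Psi_n$ and applying $U$ therefore yields the inhomogeneous recursion
\[
 w_{n+1}=E_M w_n+\lambda^{n+1} b_0,
\]
where $b_0$ is the fixed vector supported on the single arc $(0;R)$ with entry $1$.

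The one genuinely analytic ingredient is $\mathrm{spr}(E_M)<1$. Since $E_M$ is a compression of the unitary $U$ it is a contraction, so it suffices to exclude eigenvalues on the unit circle. If $E_Mv=e^{i\theta}v$ with $\|v\|=1$ and $v\in\mathrm{ran}\,\chi_M$, then $\|U\chi_Mv\|=\|v\|=\|\chi_MU\chi_Mv\|$ forces $Uv\in\mathrm{ran}\,\chi_M$, i.e.\ $v$ would be a compactly supported generalized eigenfunction of $U$; but the free evolution on the tails propagates any nonzero boundary amplitude to infinity, and a transfer-matrix argument across $\Gamma_M$ (using $abcd\neq0$, and checked directly in the degenerate cases) then forces $v=0$. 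See \cite{Parker} for the detailed spectral picture of $E_M$. Consequently $I-\lambda^{-1}E_M$ is invertible and $\|(\lambda^{-1}E_M)^n\|\to0$ geometrically.

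Now rescale: with $\widetilde w_n:=e^{i(n+1)\xi}w_n=\lambda^{-(n+1)}w_n$ the recursion becomes the autonomous affine system $\widetilde w_{n+1}=\lambda^{-1}E_M\widetilde w_n+\lambda^{-1}b_0$, whose unique fixed point is $\widetilde w_\infty=\lambda^{-1}(I-\lambda^{-1}E_M)^{-1}b_0$ and which satisfies $\widetilde w_n-\widetilde w_\infty=(\lambda^{-1}E_M)^n(\widetilde w_0-\widetilde w_\infty)\to0$. This proves convergence of $e^{i(n+1)\xi}\Psi_n$ on the arcs inside $\Gamma_M$; on the tails it then converges too by inspection, since by \eqref{def:U'} the reflected amplitude at $(-1;L)$ and the transmitted amplitude at $(M;R)$ are fixed linear combinations of convergent interior components one step earlier, each arc further out on a tail merely copies these with one extra factor $e^{i\xi}$, and all remaining tail arcs carry $\Psi_n\equiv0$. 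Denoting by $\Phi_\infty$ the resulting arc-wise limit gives (1); note that the finite system $(I-\lambda^{-1}E_M)\widetilde w_\infty=\lambda^{-1}b_0$ is precisely the one solved in closed form via Chebyshev polynomials in Section~3. For (2), the evolution $(U\psi)(x)=P_{x+1}\psi(x+1)+Q_{x-1}\psi(x-1)$ is local, so arc-wise convergence passes through $U$, and $U\Phi_\infty(x)=\lim_n e^{i(n+1)\xi}(U\Psi_n)(x)=\lim_n e^{i(n+1)\xi}\Psi_{n+1}(x)=e^{-i\xi}\Phi_\infty(x)$.

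The crux of the argument, and the one place something could go wrong, is the strict contractivity $\mathrm{spr}(E_M)<1$, i.e.\ the absence of a compactly supported generalized eigenfunction of $U$: this has to hold uniformly over all admissible coins $C$, including the degenerate cases with $abcd=0$, which is exactly what the analysis of \cite{Parker} supplies. Everything else is bookkeeping, the only recurring subtlety being that none of $\Psi_n$, $\Phi_\infty$, or the inflow lies in $\ell^2$, so that ``convergence'' must be read throughout as arc-wise (equivalently, locally uniform) convergence — which the finite-rank nature of the inflow together with the locality of $U$ renders harmless.
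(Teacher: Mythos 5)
Your reduction is essentially the one the paper itself sets up in Section~3.1: the inhomogeneous recursion $\psi_{n+1}=E_M\psi_n+e^{-i(n+1)\xi}\chi\delta_0|R\rangle$, the rescaling by $e^{i(n+1)\xi}$, the geometric convergence from $\sigma(E_M)\subset\{|\lambda|<1\}$ (Lemma~\ref{lemsonzai}), and the propagation of the limit to the tails by locality; the paper merely delegates the statement to \cite{HS} rather than writing this out. Your bookkeeping on the tails and the identity $U\Phi_\infty=e^{-i\xi}\Phi_\infty$ via arc-wise convergence is fine.

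The one thing that is actually wrong is your closing claim that the strict contractivity $\mathrm{spr}(E_M)<1$ ``has to hold uniformly over all admissible coins $C$, including the degenerate cases with $abcd=0$.'' It does not hold when $a=0$ (equivalently $d=0$, by unitarity) and $M\geq 2$: with $C=\left[\begin{smallmatrix}0&b\\c&0\end{smallmatrix}\right]$, $|b|=|c|=1$, one computes $E_M\,\chi\delta_0|L\rangle=c\,\chi\delta_1|R\rangle$ and $E_M\,\chi\delta_1|R\rangle=b\,\chi\delta_0|L\rangle$, so the two-dimensional subspace $\mathrm{span}\{\chi\delta_0|L\rangle,\chi\delta_1|R\rangle\}$ is invariant and carries the eigenvalues $\pm\sqrt{bc}$ of modulus one --- a walker trapped bouncing between adjacent impurities. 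This is precisely why Lemma~\ref{lemsonzai} carries the hypothesis $a\neq 0$, and why the ``no finitely supported eigenvector of $U$'' step in its proof invokes $a\neq0$: your transfer-matrix exclusion of compactly supported generalized eigenfunctions also breaks down there, since $B_z^{-1}$ ceases to control the recursion when the transmission amplitude vanishes. So your proof is correct under the standing assumption $a\neq 0$ (and a fortiori under $abcd\neq0$, which is what the rest of the paper assumes), but the asserted extension to all degenerate coins should be deleted rather than attributed to \cite{Parker}.
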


\noindent{\bf Relation to an absorption problem}\\
Let the reflection amplitude at time $n$ be $\tilde{\gamma}_n(z):= \langle L|\Phi_n(-1)\rangle$ with $z=e^{i\xi}$. 
We can see that  $\tilde{\gamma}_n(z)$ is rewritten by using $U'$ as follows:  
    \begin{multline*}
        z^{-1}\tilde{\gamma}_{n+1}(z) 
        = \langle \delta_{(-1;L)}, U'\delta_{(0;R)}\rangle 
        + \langle \delta_{(-1;L)}, {U'}^2\delta_{(0;R)}\rangle z \\
        +\langle \delta_{(-1;L)}, {U'}^3\delta_{(0;R)}\rangle z^2
        +\cdots + \langle \delta_{(-1;L)}, {U'}^{n+1}\delta_{(0;R)}\rangle z^{n}
    \end{multline*}
The first term is the amplitude that the inflow at time $n$ cannot penetrate into $\Gamma_M$; the $m$-th term is the amplitude that the inflow at time $n-(m-1)$ penetrates into $\Gamma_M$ and escapes $\Gamma_M$ from $0$ side at time $n$. Therefore each term corresponds to the ``absorption" amplitude to $-1$ with the  absorption walls $-1$ and $M$ with the initial state $\delta_{(0;R)}$. Then 
\begin{remark} The reflection amplitude 
$\langle L|\Phi_\infty(-1)\rangle=\lim_{n\to\infty}\tilde{\gamma}_n(z)$ coincides with the generating function of the absorption amplitude to $-1$ with respect to time $n$ 
while the transmitting amplitude $\langle R|\Phi_\infty(M)\rangle=\lim_{n\to\infty}\tilde{\tau}_n(z)$ coincides with the generating function of the absorption amplitude to $M$ with respect to time $n$.  
\end{remark}
Put $\gamma_n:= |\langle \delta_{(-1;L)}, {U'}^n\delta_{(0;R)}\rangle|^2$ and $\tau_n:=|\langle \delta_{(M;R)}, {U'}^n\delta_{(0;R)}\rangle|^2$ which are the absorption/ first hitting probabilities at positions $-1$ and $M$, respectively starting from $(0:R)$.   
From the above observation, for example, we can express the $m$-th moments of the absorption/hitting times to $-1$ and $M$ as follows:  
\[ \sum_{n\geq 1} n^m \gamma_n = \int_{0}^{2\pi}   \overline{\langle L|\Phi_\infty(-1)\rangle}
\left(-i \frac{\partial}{\partial \xi}\right)^m
\langle L|\Phi_\infty(-1)\rangle \frac{d\xi}{2\pi}, \]
\[ \sum_{n\geq 1} n^m \tau_n = \int_{0}^{2\pi}   \overline{\langle R|\Phi_\infty(M)\rangle}
\left(-i \frac{\partial}{\partial \xi}\right)^m
\langle R|\Phi_\infty(M)\rangle \frac{d\xi}{2\pi}. \]

\noindent{\bf Relation to Scattering of quantum walk}\\
The stationary state $ \Phi _{\infty} $ is the generalized eigenfunction of $U$ in $\ell^{\infty} ( \mathbb{Z};\mathbb{C}^2 )$.
The scattering matrix naturally appears in $\Phi_{\infty}$ (see \cite{KKMS}).
In the time independent scattering theory, the inflow can be considered as the incident ``plane wave", and the impurities cause the scattered wave by transmissions and reflections.
Thus we can see the transmission coefficient and the reflection coefficient in $\Phi_{\infty} (x)$ for $x\in \mathbb{Z}\setminus \Gamma_M $. 
For studies of a general theory of scattering, we also mention the recent work by Tiedra de Aldecoa \cite{Ti}.

\section{Computation of stationary state}
\subsection{Preliminary}
Recall that $|L\rangle$ and $|R\rangle$ represent the standard basis of $\mathbb{C}^2$; 
that is, $|L\rangle=[1,0]^\top$ and $R\rangle=[0,1]^\top$. 
Let $\chi: \ell^2(\mathbb{Z};\mathbb{C}^2)\to \ell^2(\Gamma_M;\mathbb{C}^2)$ be a boundary operator such that 
$(\chi\psi)(a)=\psi(a)$ for any $a\in \{(x;R),(x;L)\;|\;x\in\Gamma_M\}$. 
Here the adjoint $\chi^*: \ell^2(\Gamma_M;\mathbb{C}^2)\to \ell^2(\mathbb{Z};\mathbb{C}^2)$ is described by 
	\[ (\chi^*\varphi)(a)=\begin{cases} \varphi(a) & \text{: $a\in \{(x;R),(x;L)\;|\;x\in\Gamma_M\}$,} \\ 0 & \text{: otherwise.} \end{cases} \]
We put the principal submatrix of $U$ with respect to the impurities by $E_M:=\chi U\chi^{*}$. 
The matrix form of $E_M$ with the computational basis $\chi\delta_0|L\rangle, \chi\delta_0|R\rangle,\dots, \chi\delta_{M-1}|L\rangle, \chi\delta_{M-1}|R\rangle$ 
is expressed by the following $2M\times 2M$ matrix: 
	\begin{equation}\label{eq:E_MMat}
	     E_M= \begin{bmatrix} 
        0 & P &        &        &   \\ 
        Q & 0 & P      &        &   \\
          & Q & 0      & \ddots &   \\
          &   & \ddots & \ddots & P \\
          &   &        & Q      & 0 
        \end{bmatrix} 
	\end{equation}
We express the $((x;J),(x';J'))$ element of $E_M$ by 
    \[ (E_M)_{(x;J),(x';J')}:= \bigg\langle \chi\delta_x|J\rangle , E_M \chi \delta_{x'}|J'\rangle \bigg\rangle_{\mathbb{C}^{2M}}. \]

Putting $\psi_n:=\chi \Psi_n$,  we have 
    \begin{align*}
    \psi_{n+1} &= \chi U(\chi^*\chi+(1-\chi^*\chi))\Psi_{n} \\
        &= E_M \psi_{n} +\chi U(1-\chi^*\chi)\Psi_n \\
        &= E_M \psi_n + e^{-i(n+1)\xi} \chi \delta_0|R\rangle. 
    \end{align*}
Then, putting $\phi_n:=e^{i(n+1)\xi}\psi_n$, we have 
    \begin{align}
    e^{-i\xi}\phi_{n+1} &= E_M\phi_n + \chi \delta_0|R\rangle. 
    \end{align}
From \cite{HS}, $\varphi:=\exists \lim_{n\to\infty}\phi_n$. 
Then the stationary state restricted to $\Gamma_M$ satisfies
    \begin{equation}\label{eq:popo}
        (e^{-i\xi}-E_M)\phi_{\infty} =  \chi \delta_0|R\rangle. 
    \end{equation}
%
About the uniqueness of this solution is ensured by the following Lemma because the existence of the inverse of $(e^{-i\xi}-E_M)$ is included by this Lemma. 
\begin{lemma}\label{lemsonzai}
Let $E_{M}$ be the above {with $a\neq 0$}.$^\dagger$ Then 
$\sigma(E_{M})\subset \{ \lambda \in \mathbb{C} \;|\; |\lambda|<1 \}$. 
\end{lemma}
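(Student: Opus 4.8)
The plan is to exploit that $E_{M}=\chi U\chi^{*}$ is a compression of the unitary operator $U$ to the finite-dimensional subspace $\mathcal{H}_{\Gamma_M}:=\chi\mathcal{H}\cong\mathbb{C}^{2M}$, and to argue that such a compression cannot carry spectrum on the unit circle unless $U$ itself admits an eigenfunction supported inside $\Gamma_M$ — which the hypothesis $a\neq 0$ forbids. First I would record the easy half: since $\|\chi\|=\|\chi^{*}\|=1$, with $\chi^{*}$ isometric and $\chi^{*}\chi$ the orthogonal projection onto $\mathcal{H}_{\Gamma_M}$, one has $\|E_{M}\|\le\|\chi\|\,\|U\|\,\|\chi^{*}\|=1$; hence $\sigma(E_{M})\subset\{\lambda\in\mathbb{C}:|\lambda|\le 1\}$ and any putative eigenvalue with $|\lambda|>1$ is excluded outright. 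So the real task is to rule out eigenvalues on the unit circle.

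For that, I would take an eigenvector $E_{M}v=\lambda v$ with $v\neq 0$ and $|\lambda|=1$, and set $\Psi:=\chi^{*}v\in\mathcal{H}$, the extension of $v$ by zero outside $\Gamma_M$. Then $\|U\Psi\|=\|\Psi\|=\|v\|$, whereas $\chi U\Psi=\chi U\chi^{*}v=E_{M}v=\lambda v$ gives $\|\chi U\Psi\|=|\lambda|\,\|v\|=\|U\Psi\|$. Since $\|\chi w\|$ is exactly the norm of the orthogonal projection of $w$ onto $\mathcal{H}_{\Gamma_M}$, equality forces $U\Psi$ to be supported in $\Gamma_M$, i.e. $U\Psi=\chi^{*}\chi U\Psi=\lambda\Psi$. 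Thus $\Psi$ is a nonzero, finitely supported (hence genuine $\ell^{2}$) eigenfunction of $U$ with $U\Psi=\lambda\Psi$, living in $\Gamma_M$.

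It remains to contradict the existence of such $\Psi$, and this is where $a\neq 0$ is essential. Let $x_{0}:=\min\supp\Psi\in\{0,\dots,M-1\}$. Evaluating $U\Psi=\lambda\Psi$ at the vertex $x_{0}-1$, using $(U\psi)(x)=P_{x+1}\psi(x+1)+Q_{x-1}\psi(x-1)$ and $\Psi(x_{0}-2)=0$, gives $P\,\Psi(x_{0})=0$ with $P=|L\rangle\langle L|C$, so $\Psi(x_{0})\in\Ker P$. Evaluating at the vertex $x_{0}$, with $\Psi(x_{0}-1)=0$, gives $\lambda\,\Psi(x_{0})=P_{x_{0}+1}\Psi(x_{0}+1)$; when $x_{0}\le M-2$ this reads $\lambda\,\Psi(x_{0})=P\,\Psi(x_{0}+1)\in\mathrm{ran}\,P$, and since $\Ker P=\mathbb{C}\,[b,-a]^{\top}$ meets $\mathrm{ran}\,P=\mathbb{C}\,[1,0]^{\top}$ only at $0$ precisely because $a\neq 0$, we get $\Psi(x_{0})=0$; when $x_{0}=M-1$, $\Psi(x_{0}+1)=\Psi(M)=0$ and $\lambda\neq 0$ give $\Psi(x_{0})=0$ directly. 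Either way this contradicts $x_{0}\in\supp\Psi$, so no eigenvalue of $E_{M}$ lies on the unit circle and $\sigma(E_{M})\subset\{\lambda\in\mathbb{C}:|\lambda|<1\}$. (Alternatively one can skip the minimum-support trick and propagate $\Psi(0)=\Psi(1)=\cdots=0$ from the boundary by induction, which uses $a\neq 0$ in the same way.)

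I expect the one genuinely delicate point to be making the equality-case step of the second paragraph rigorous — that a compressed eigenvector with $|\lambda|=1$ lifts to an honest eigenfunction of $U$ supported in $\Gamma_M$ — since everything else reduces either to the trivial operator-norm estimate or to a short boundary computation whose only subtlety is the case split on whether the leftmost support point is interior to $\Gamma_M$ or sits at its right endpoint.
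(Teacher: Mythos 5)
Your argument is correct, and its skeleton --- the norm estimate $\|E_M\psi\|^2=\langle U\chi^*\psi,\chi^*\chi U\chi^*\psi\rangle\le\|U\chi^*\psi\|^2=\|\psi\|^2$ followed by the equality-case observation that a unimodular eigenvalue would lift to a finitely supported eigenfunction $U\Psi=\lambda\Psi$ with $\supp\Psi\subset\Gamma_M$ --- is exactly the paper's. The step you flag as delicate is handled in the paper in the same way and is unproblematic: for an orthogonal projection, preservation of the norm forces membership in the range. Where you genuinely depart from the paper is in the last step. The paper dismisses the finitely supported eigenfunction by asserting that a position-independent quantum walk on $\mathbb{Z}$ with $a\neq 0$ has purely continuous spectrum; this is slightly loose, since the operator $U$ at hand is not position-independent (it carries the impurities on $\Gamma_M$), so one still needs to localize that general fact to the present setting. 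Your replacement --- evaluating $U\Psi=\lambda\Psi$ at $x_0-1$ and at $x_0$ for $x_0=\min\supp\Psi$, and using that $\Ker P=\mathbb{C}[b,-a]^{\top}$ meets $\mathrm{ran}\,P=\mathbb{C}[1,0]^{\top}$ only at $0$ precisely when $a\neq 0$ --- is an elementary, self-contained computation that supplies exactly this missing localization, at the cost of a short case split at the right endpoint. It is the more careful version of the paper's final sentence, and nothing in it is wrong.
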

\begin{proof}
Let $\psi\in \ell^2(\Gamma_M,\mathbb{C}^2)$ be an eigenvector of eigenvalue $\lambda\in \sigma(E_M)$. 
Then 
	\begin{align}\label{eq:EM}  
        |\lambda|^2 ||\psi||^2
        	=|| E_M \psi ||^2 = \langle U\chi^*\psi, \chi^*\chi U\chi^*\psi \rangle 
        	\leq \langle  U\chi^*\psi, U\chi^*\psi \rangle 
                =||\chi^*\psi ||^2
                = ||\psi||^2. 
        \end{align}
Here for the inequality, we used the fact that $\chi^*\chi$ is the projection operator onto 
\[ \mathrm{span}\{ \delta_x |L\rangle, \delta_x|R\rangle \;|\; x\in \Gamma_M \}\subset \ell^2(\mathbb{Z};\mathbb{C}^2)\] 
while 
for the final equality, we used the fact that $\chi\chi^*$ is the identity operator on $\ell^2(\Gamma_M;\mathbb{C}^2)$. 
If the equality in (\ref{eq:EM}) holds, then $\chi^*\chi U\chi^*\psi=U\chi^*\psi$ holds. 
Then we have the eigenequation $U\chi^*\psi=\lambda \chi^*\psi$ by taking $\chi^*$ to both sides of 
the original eigenequation $\chi U\chi^*\psi=\lambda \psi$. 
However there are no eigenvectors having finite supports in a position independent quantum walk on $\mathbb{Z}$ with $a\neq 0$ since 
its spectrum is described by only a continuous spectrum in general. 
Thus $|\lambda|^2 < 1$.  
\end{proof}
Now let us solve this equation (\ref{eq:popo}). 
The matrix representation of $E_M$ with the permutation of the labeling such that $(x;R)\leftrightarrow (x;L)$ for any $x\in \Gamma_M$ to (\ref{eq:E_MMat}) is  
\[ E_M\cong \left[
    \begin{array}{cc|cc|cc|cc|cc}
    0 & 0 & 0 & 0 &        &   &       & & &\\ 
    0 & 0 & b & a &        &   &       & & &\\ \hline
    d & c & 0 & 0 & 0      & 0 &       & & &\\ 
    0 & 0 & 0 & 0 & b      & a &       & & &\\ \hline
     &  & d & c &  \ddots     &  &\ddots & & &\\ 
     &  & 0 & 0 &       &  &       & & &\\ \hline
      &   &   &   & \ddots &   &\ddots & & 0 & 0 \\ 
      &   &   &   &        &   &       & & b & a \\ \hline
      &   &   &   &        &   &    d &c &  0& 0\\
      &   &   &   &        &   & 0    & 0&  0& 0 
    \end{array}\right]. \]
Then the equation (\ref{eq:popo}) is expressed by 
\[ \left[
    \begin{array}{cc|cc|cc|cc|cc}
    z & 0 & 0 & 0 &        &   &       & & &\\ \hline
    0 & z & -b & -a &        &   &       & & &\\ 
    -d & -c & z & 0 &       &  &       & & &\\ \hline
     &  & 0 & z & -b      & -a &       & & &\\ 
     &  & -d & -c & z & 0 & & & &\\ \hline
     &  &  &  & \ddots &  & \ddots & & &\\ 
      &   &   &   &  &   & & &  &  \\ \hline
      &   &   &   &        &   &    0 & z & -b & -a \\ 
      &   &   &   &        &   &    -d & -c &  z& 0\\ \hline
      &   &   &   &        &   &     &  &  0& z 
    \end{array}\right]
   \left[
   \begin{array}{c}
        \varphi(0;R) \\ \varphi(0;L)  \\ \hline
        \varphi(1;R) \\ \varphi(1;L)  \\ \hline
        \vdots \\ \vdots \\ \hline
        \varphi(M-2;R) \\ \varphi(M-2;L) \\ \hline
        \varphi(M-1;R) \\ \varphi(M-1;L)
   \end{array}
   \right] 
    =\left[
   \begin{array}{c}
        1 \\ 0  \\ \hline
        0 \\ 0  \\ \hline
        \vdots \\ \vdots \\ \hline
        0 \\ 0 \\ \hline
        0 \\ 0
   \end{array}
   \right].  \]
Here we changed the way of blockwise of $E_M$ and we put $z=e^{-i\xi}$.  
Putting 
    \[ A_z :=\begin{bmatrix} 0 & z \\ -d & -c \end{bmatrix}, \;
       B_z :=\begin{bmatrix} -b & -a \\ z & 0 \end{bmatrix}, \]
we have 
    \begin{align}\label{eq:rec}
        \begin{bmatrix} z & 0 \end{bmatrix}\vec{\varphi}(0) &= 1, \;\; 
        A_z \vec{\varphi}(0) + B_z \vec{\varphi}(1) = 0,\;\;
        A_z \vec{\varphi}(1) + B_z \vec{\varphi}(2) = 0,\dots \notag\\
        & \dots, A_z \vec{\varphi}(M-2) + B_z \vec{\varphi}(M-1) = 0,\;\;
        \begin{bmatrix} 0 & z \end{bmatrix}\vec{\varphi}(M-1) = 0,
    \end{align}
where $\vec{\varphi}(x)=[\varphi(x;R),\varphi(x;L)]^\top$ for any $x\in \Gamma_M$. 
The inverse matrix of $B_z$ exists since $z\neq 0$. 
Then we have 
    \begin{align}\label{eq:T}
        \vec{\varphi}(1)=T\vec{\varphi}(0),\;
        \vec{\varphi}(2)=T^2\vec{\varphi}(0),\dots,\vec{\varphi}(M-1)=T^{M-1}\vec{\varphi}(0), 
    \end{align}
where 
    \[T=-B_z^{-1}A_z=\frac{1}{az}\begin{bmatrix} \Delta |a|^2 & -\Delta a\bar{b} \\ -\Delta \bar{a}b & z^2+\Delta |b|^2 \end{bmatrix}. \]
Here $\Delta=\det (P+Q)$. 
For the boundaries, there exists $\kappa$ such that 
    \begin{equation}\label{eq:boundary}
        \vec{\varphi}(0)=\begin{bmatrix} z^{-1} & \kappa \end{bmatrix},\;
        \begin{bmatrix}0 & z\end{bmatrix}\vec{\varphi}(M-1)=0.        
    \end{equation}
By (\ref{eq:T}) and (\ref{eq:boundary}), $\kappa$ satisfies 
    \begin{equation}
        \left\langle \begin{bmatrix} 0 \\ 1  \end{bmatrix}, T^{M-1}\begin{bmatrix} z^{-1} \\ \kappa \end{bmatrix}  \right\rangle=0
    \end{equation}
which is equivalent to  
    \[ \kappa=-\frac{z^{-1}(T^{M-1})_{2,1}}{(T^{M-1})_{2,2}}.  \]

Now the problem is reduced to consider the $n$-th power of $T$ because the eigengector is expressed by $\vec{\varphi}(n)= T^n\vec{\varphi}(0)$. 
Since $T$ is a just $2\times 2$ matrix,  we can prepare the following lemma.
\begin{lemma}\label{lem:nthpower}
Let $A$ be a $2$-dimensional matrix denoted by 
    \[ A=\begin{bmatrix} \alpha & \beta \\ \gamma & \delta \end{bmatrix}.  \]
\begin{enumerate}
    \item $(\alpha-\delta)^2+4\beta\gamma=0$ and $A\neq \epsilon I$ for some $\epsilon$ case.  
    Let $\lambda=(\alpha+\delta)/2$. Then
        \[ A^n=\begin{bmatrix} 
        \lambda^n+\frac{\alpha-\delta}{2}n\lambda^{n-1} & \beta n\lambda^{n-1} \\ 
        \gamma n \lambda^{n-1} & \lambda^n-\frac{\alpha-\delta}{2}n\lambda^{n-1} \end{bmatrix} \]
    \item Otherwise. 
    Let $\zeta_n:=(\det(A)^{1/2})^{n-1}U_{n-1}(\frac{\mathrm{tr}(A)}{2 \det(A)^{1/2}})$ for $n\geq 1$. Then  
        \[ A^n=\begin{bmatrix} \zeta_{n+1}-\delta\zeta_n & \beta\zeta_n \\ \gamma\zeta_n & \zeta_{n+1}-\alpha\zeta_n \end{bmatrix}, \]
where $U_n(\cdot)$ is the Chebyshev polynomial of the second kind. 
\end{enumerate}
\end{lemma}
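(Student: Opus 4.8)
The plan is to obtain both formulas from the Cayley--Hamilton identity $A^2=\mathrm{tr}(A)\,A-\det(A)\,I$, distinguishing the two cases according to whether the characteristic polynomial $x^2-\mathrm{tr}(A)\,x+\det(A)$ has a repeated root. The first thing to record is that its discriminant equals
\[ \mathrm{tr}(A)^2-4\det(A)=(\alpha+\delta)^2-4(\alpha\delta-\beta\gamma)=(\alpha-\delta)^2+4\beta\gamma, \]
so the hypothesis in case~(1) says exactly that $A$ has the double eigenvalue $\lambda=\mathrm{tr}(A)/2=(\alpha+\delta)/2$.

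In case~(1), Cayley--Hamilton then reads $(A-\lambda I)^2=0$. I would set
\[ N:=A-\lambda I=\begin{bmatrix} \tfrac{\alpha-\delta}{2} & \beta \\ \gamma & \tfrac{\delta-\alpha}{2} \end{bmatrix}, \]
so that $N^2=0$ forces the binomial expansion of $A^n=(\lambda I+N)^n$ (the two summands commute) to collapse to its first two terms, $A^n=\lambda^n I+n\lambda^{n-1}N$, with the usual convention $\lambda^{0}=1$ needed when $n=1$. Reading off the four entries of $\lambda^n I+n\lambda^{n-1}N$ gives precisely the asserted matrix; the assumption $A\neq\epsilon I$ enters only to guarantee $N\neq0$, i.e.\ that the nilpotent correction is genuinely present.

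In case~(2), I would introduce the scalar sequence $(\zeta_n)_{n\ge0}$ defined by $\zeta_0=0$, $\zeta_1=1$, and $\zeta_{n+1}=\mathrm{tr}(A)\,\zeta_n-\det(A)\,\zeta_{n-1}$, and first prove by induction on $n\ge1$ the matrix identity
\[ A^n=\zeta_n\,A-\det(A)\,\zeta_{n-1}\,I. \]
The base case $n=1$ is immediate, and in the inductive step one multiplies this identity by $A$ and replaces $A^2$ by $\mathrm{tr}(A)\,A-\det(A)\,I$; the coefficient of $A$ becomes $\mathrm{tr}(A)\,\zeta_n-\det(A)\,\zeta_{n-1}=\zeta_{n+1}$ and the coefficient of $I$ becomes $-\det(A)\,\zeta_n$, which is the identity for $n+1$. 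Writing $\zeta_n A-\det(A)\zeta_{n-1}I$ out entrywise and using $\zeta_n\alpha-\det(A)\zeta_{n-1}=\zeta_{n+1}-\delta\zeta_n$ (and the symmetric relation for the $(2,2)$ entry) reproduces the matrix in the statement. It then remains to identify $\zeta_n$ with $(\det(A)^{1/2})^{n-1}U_{n-1}\big(\mathrm{tr}(A)/(2\det(A)^{1/2})\big)$: the latter sequence has the same two initial values ($U_{-1}=0$, $U_0=1$), and substituting $x=\mathrm{tr}(A)/(2\det(A)^{1/2})$ into $U_{m+1}(x)=2xU_m(x)-U_{m-1}(x)$ and multiplying through by $(\det(A)^{1/2})^{m}$ turns this three-term recurrence into $\zeta_{m+2}=\mathrm{tr}(A)\zeta_{m+1}-\det(A)\zeta_m$, so the two sequences coincide.

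The one place that needs a word of care is that $\det(A)^{1/2}$ has no canonical meaning in general, but this is harmless: since $U_m$ contains only monomials of the parity of $m$, the expression $(\det(A)^{1/2})^{n-1}U_{n-1}(\cdot)$ is in fact a polynomial in $\mathrm{tr}(A)$ and $\det(A)$, independent of the branch, and the whole argument (including the identification of $\zeta_n$) goes through verbatim in the degenerate subcase $\det(A)=0$, where one simply gets $\zeta_n=\mathrm{tr}(A)^{n-1}$. Beyond this remark and the initial identification of the discriminant in case~(1), the proof is a routine induction followed by an entrywise comparison, so I do not anticipate any serious obstacle.
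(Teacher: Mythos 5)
Your proof is correct and complete: the identification of the discriminant, the nilpotent decomposition $A^n=(\lambda I+N)^n=\lambda^nI+n\lambda^{n-1}N$ in case (1), the induction $A^n=\zeta_nA-\det(A)\zeta_{n-1}I$ via Cayley--Hamilton in case (2), and the matching of the three-term recurrence with the Chebyshev polynomials all check out, and your remark that the branch of $\det(A)^{1/2}$ is immaterial (and that the argument survives $\det(A)=0$ and the scalar case $A=\epsilon I$, where $\zeta_n=n\epsilon^{n-1}$ still gives $A^n=\epsilon^nI$) closes the only loose ends. The paper states this lemma without proof, so there is nothing to compare against; your Cayley--Hamilton argument is the standard route and is exactly what the authors implicitly rely on.
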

\begin{remark}
The condition of ``$(\alpha-\delta)^2+4\beta\gamma=0$  and $A\neq \epsilon I$" is a necessary and sufficient condition of the non-diagonalizability of $A$.  
\end{remark}
\begin{remark}
For $A=T$ case, 
the condition of (1) is reduced to 
\[ \omega:=\Delta^{-1/2}z\in \{ \epsilon_1|a|+\epsilon_2i|b| \;|\; \epsilon_1,\epsilon_2\in \{\pm 1\}  \}=:\partial B.\] 
\end{remark}
\begin{remark}
For $A=T$ case, 
the variable of the Chebyshev polynomial of (2) is reduced to 
\[ \mathrm{tr}(T)/(2 \det(T)^{1/2}) = (\omega+\omega^{-1})/(2|a|). \]
Moreover if $\omega=e^{ik}$, the Chebyshev polynomial is described by $U_{-1}(\cdot)=0$, 
    \begin{align*} 
    U_{n}( \cos k/|a|)
    &=\frac{\lambda_+^{n+1}-\lambda_-^{n+1}}{\lambda_+-\lambda_-}\;\;(n\geq 0).  
    \end{align*}
Here $\lambda_{\pm}$ in RHS are the solutions of the quadratic equation 
\[\lambda^2-\frac{2\cos k}{|a|} \lambda+1=0\] 
with $|\lambda_-|\leq |\lambda_+|$.
\end{remark}
\subsection{Transmitting and reflecting rates}
Let us divide the unit circle in the complex plain as follows: 
\begin{align}
    B_{in} = \{ e^{ik} \;|\; |\cos k|<|a| \},\;
    \partial B = \{ e^{ik} \;|\; |\cos k|=|a| \},\;
    B_{out} = \{ e^{ik} \;|\; |\cos k|>|a| \}.
\end{align}
By the unitarity of $\begin{bmatrix} a & b \\ c & d\end{bmatrix}$ and using the Chebyshev recursion; $U_{n+1}(x)=2xU_n(x)-U_{n-1}(x)$, we insert (1) and (2) in Lemma~\ref{lem:nthpower} into (\ref{eq:T}), and we have an explicit expression for the stationary state as follows. 
    \begin{theorem}\label{lem:stationarystate}
Let the stationary state restricted to $\{0,1,\dots,M-1\}$ be $\phi_\infty$ and $\vec{\varphi}(n):=[\phi_\infty(n;R)\;\phi_\infty(n;L)]^\top$. 
Then we have 
    \begin{equation}
        \vec{\varphi}(n)= 
        \begin{cases}\frac{z^{-1}(\alpha \Delta^{-1/2})^{-n}}{\omega \zeta'_M-|a|\zeta'_{M-1}}\begin{bmatrix}\omega \zeta'_{M-n}-|a|\zeta'_{M-n-1} \\ \alpha b \zeta'_{M-n-1}\end{bmatrix} & \text{: $\omega\notin \partial B$} \\
        \\
        \frac{\Delta^{-1/2}\lambda^n}{\epsilon_R|a|+i\epsilon_IM|b|}\begin{bmatrix}\epsilon_R\alpha(\epsilon_R|a|+i\epsilon_I|b|(M-n))\\ b(M-n-1)\end{bmatrix} & \text{: $\omega \in \partial B$}
\end{cases}
    \end{equation}
for $n=0,1,\dots,M-1$, where $\alpha=a/|a|$ and $\zeta'_m=U_{m-1}(\frac{\omega+\omega^{-1}}{2|a|})$ $(m\geq 0)$, $\lambda=\mathrm{sgn}(\epsilon_R)\alpha^{-1}\Delta^{1/2}$. 
Here $\epsilon_R=\mathrm{sgn}(Re(\omega))$ and $\epsilon_I=\mathrm{sgn}(Im(\omega))$
    \end{theorem}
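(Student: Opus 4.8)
The plan is to start from the explicit recursion already in hand: by \eqref{eq:T} and \eqref{eq:boundary} the whole stationary state is $\vec{\varphi}(n)=T^n\vec{\varphi}(0)$ with $\vec{\varphi}(0)=[z^{-1},\kappa]^\top$ and $\kappa=-z^{-1}(T^{M-1})_{2,1}/(T^{M-1})_{2,2}$, so every component of $\vec{\varphi}(n)$ is an explicit rational combination of the entries of $T^n$ and of $T^{M-1}$. The first step is to feed Lemma~\ref{lem:nthpower} with $A=T$. For that I would compute, using the unitarity of $\begin{bmatrix}a&b\\c&d\end{bmatrix}$ (which gives $d=\Delta\bar a$, $c=-\Delta\bar b$, $|a|^2+|b|^2=1$), that $\mathrm{tr}(T)=(\Delta+z^2)/(az)$ and $\det(T)=d/a=\Delta/\alpha^2$, hence $\det(T)^{1/2}=\alpha^{-1}\Delta^{1/2}=:\mu$ and the Chebyshev argument $\mathrm{tr}(T)/(2\det(T)^{1/2})=(\omega+\omega^{-1})/(2|a|)$, exactly as recorded in the Remarks after Lemma~\ref{lem:nthpower}. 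The dichotomy in Lemma~\ref{lem:nthpower} --- diagonalizable vs.\ a single nontrivial Jordan block --- is then precisely the dichotomy $\omega\notin\partial B$ vs.\ $\omega\in\partial B$.

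For $\omega\notin\partial B$ I would use part (2) of Lemma~\ref{lem:nthpower}, writing $\zeta_m=\mu^{m-1}\zeta'_m$ so that
\[(T^n)_{11}=\zeta_{n+1}-T_{22}\zeta_n,\quad (T^n)_{12}=T_{12}\zeta_n,\quad (T^n)_{21}=T_{21}\zeta_n,\quad (T^n)_{22}=\zeta_{n+1}-T_{11}\zeta_n,\]
and similarly for $T^{M-1}$. Substituting into $\vec{\varphi}(n)_1=z^{-1}\big[(T^n)_{11}(T^{M-1})_{22}-(T^n)_{12}(T^{M-1})_{21}\big]/(T^{M-1})_{22}$ and into the analogous formula for $\vec{\varphi}(n)_2$, the numerators expand into four-term products of $\zeta$'s; the cross terms combine first through $\det(T)=T_{11}T_{22}-T_{12}T_{21}$ and then collapse by (i) the three-term recursion $U_{k}=2wU_{k-1}-U_{k-2}$ with $w=(\omega+\omega^{-1})/(2|a|)$ and (ii) the Chebyshev product identity $U_aU_b-U_{a+1}U_{b-1}=U_{a-b}$ together with $U_{-k}=-U_{k-2}$. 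Pulling the powers of $\mu$ out front, the numerator of the first component compresses to a multiple of $\mu^{n}(\omega\zeta'_{M-n}-|a|\zeta'_{M-n-1})$ and that of the second to a multiple of $\mu^{n}b\,\zeta'_{M-n-1}$, while the common denominator $(T^{M-1})_{22}$ becomes a constant multiple of $\omega\zeta'_M-|a|\zeta'_{M-1}$; rewriting $\mu^n=(\alpha\Delta^{-1/2})^{-n}$ gives the first line of the claimed formula. Nonvanishing of the denominator, i.e.\ invertibility of $e^{-i\xi}-E_M$, is exactly Lemma~\ref{lemsonzai}.

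For $\omega\in\partial B$ the matrix $T$ is a genuine Jordan block (this uses $b\ne0$; when $b=0$ the impurity is trivial), with double eigenvalue $\lambda=\mathrm{tr}(T)/2=\mathrm{sgn}(\mathrm{Re}\,\omega)\,\alpha^{-1}\Delta^{1/2}$, since on $\partial B$ one has $|\omega|=1$ and $\omega+\omega^{-1}=2\,\mathrm{Re}\,\omega=2\epsilon_R|a|$. I would then rerun the same substitution using part (1) of Lemma~\ref{lem:nthpower}; now the terms proportional to $n\lambda^{n-1}$ persist, and after the analogous cancellations the components acquire the advertised $(M-n)$- and $(M-n-1)$-dependence, with the denominator $\epsilon_R|a|+i\epsilon_IM|b|$ emerging from $(T^{M-1})_{22}$ after simplifying $T_{11}-T_{22}$ on $\partial B$.

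The delicate part is the algebra of the second paragraph: choosing the combination of Chebyshev identities that makes the four-term sums collapse, and carrying the unimodular phase $\alpha$ and the chosen branch of $\Delta^{1/2}$ consistently through every step (a stray factor of $\alpha^{\pm1}$ is the easiest slip). A good internal check is that the $\omega\in\partial B$ expression must coincide with the $\omega\to\partial B$ limit of the generic one (using $\zeta'_m\to$ the corresponding degenerate Chebyshev value), and that both formulas can be verified a posteriori to satisfy the recursion \eqref{eq:rec} and the boundary conditions \eqref{eq:boundary} directly, which pins down any residual sign ambiguity.
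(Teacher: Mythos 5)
Your proposal is correct and follows essentially the same route as the paper, which proves the theorem exactly by inserting the two cases of Lemma~\ref{lem:nthpower} (with $\det(T)^{1/2}=\alpha^{-1}\Delta^{1/2}$ and Chebyshev argument $(\omega+\omega^{-1})/(2|a|)$) into $\vec{\varphi}(n)=T^n\vec{\varphi}(0)$ with the boundary-determined $\kappa$, then simplifying via unitarity and the Chebyshev recursion; your write-up just makes explicit the product identities the paper leaves unstated. Your closing remark about tracking the unimodular phase is well taken: carrying $T_{21}=-\mu\,\alpha^{-1}b\,\omega^{-1}$ through the computation actually yields $\alpha^{-1}b\,\zeta'_{M-n-1}$ (rather than $\alpha b\,\zeta'_{M-n-1}$) in the second component, a discrepancy that is immaterial for the moduli used later but worth pinning down by the a posteriori check against \eqref{eq:rec} that you suggest.
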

    
Since the transmitting and reflecting rates are computed by 
    \begin{align*} 
    T(\omega) &= \left|\left\langle \begin{bmatrix} 1 \\ 0 \end{bmatrix}, \vec{\varphi}(M-1)  \right\rangle \times d\right|^2,\\
    R(\omega) &= \left|\left\langle \begin{bmatrix} 0 \\ 1 \end{bmatrix}, \vec{\varphi}(0)  \right\rangle \times a+ \left\langle \begin{bmatrix} 1 \\ 0 \end{bmatrix}, \vec{\varphi}(0)  \right\rangle \times b\right|^2,
    \end{align*}
we obtain explicit expressions for them as follows. 
\begin{corollary}\label{prop:tr}
Assume $abcd\neq 0$. 
For any $\omega\in \mathbb{R}/(2\pi\mathbb{Z})$, we have 
\begin{align}
    T(\omega) &= \frac{|a|^2}{|a|^2+|b|^2{\zeta'}^2_{M}} \\
    R(\omega) &= \frac{|b|^2 {\zeta'}^2_M}{|a|^2+|b|^2{\zeta'}^2_M}
\end{align}
\end{corollary}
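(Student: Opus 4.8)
The plan is to substitute the closed form of the stationary state from Theorem~\ref{lem:stationarystate} into the two defining formulas for $T(\omega)$ and $R(\omega)$ and to reduce everything to a single Chebyshev identity. Two preliminary observations make the bookkeeping clean. First, since $P+Q=C$ and $C$ is unitary we have $|\Delta|=|\det C|=1$, $|a|=|d|$, $|b|=|c|$ and $|a|^2+|b|^2=1$; in particular $\omega=\Delta^{-1/2}z$ lies on the unit circle, so $\omega^{-1}=\bar\omega$ and $(\omega+\omega^{-1})/2=\mathrm{Re}\,\omega$. Second, $\zeta'_m=U_{m-1}\big((\omega+\omega^{-1})/(2|a|)\big)$ is real, and every scalar prefactor appearing in Theorem~\ref{lem:stationarystate} — namely $z^{-1}$, $\alpha$, $\Delta^{\pm1/2}$, $\lambda$ — has modulus $1$, hence disappears on taking $|\cdot|^2$. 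Consequently both rates reduce to algebra in $|a|,|b|$ and the real numbers $\zeta'_{M-1},\zeta'_M$.

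For $T(\omega)$ I would evaluate the first branch of Theorem~\ref{lem:stationarystate} at $n=M-1$. Recalling $U_{-1}\equiv0$ and $U_0\equiv1$, so $\zeta'_1=1$ and $\zeta'_0=0$, the numerator vector becomes $[\,\omega\zeta'_1-|a|\zeta'_0,\ \alpha b\zeta'_0\,]^\top=[\,\omega,\,0\,]^\top$, whence $\varphi(M-1;R)=z^{-1}(\alpha\Delta^{-1/2})^{-(M-1)}\omega/(\omega\zeta'_M-|a|\zeta'_{M-1})$. Taking $|\cdot|^2$, cancelling the unimodular factors and using $|d|=|a|$,
\[
T(\omega)=|d|^2\,|\varphi(M-1;R)|^2=\frac{|a|^2}{\,|\omega\zeta'_M-|a|\zeta'_{M-1}|^2\,}.
\]
It then remains to evaluate the denominator. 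Expanding with $|\omega|=1$ and $\zeta'_m\in\mathbb{R}$ gives
\[
|\omega\zeta'_M-|a|\zeta'_{M-1}|^2={\zeta'}^2_M-2|a|(\mathrm{Re}\,\omega)\,\zeta'_M\zeta'_{M-1}+|a|^2{\zeta'}^2_{M-1},
\]
and, writing $x=(\mathrm{Re}\,\omega)/|a|$ for the argument of the Chebyshev polynomials and using $|b|^2=1-|a|^2$, the right-hand side equals $|a|^2+|b|^2{\zeta'}^2_M$ precisely when ${\zeta'}^2_M-2x\,\zeta'_M\zeta'_{M-1}+{\zeta'}^2_{M-1}=1$. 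This last equality is the Tur\'an-type Chebyshev identity $U_{M-1}(x)^2-U_M(x)U_{M-2}(x)=1$ (immediate from $U_M=2xU_{M-1}-U_{M-2}$, or from $U_n(\cos\theta)=\sin((n+1)\theta)/\sin\theta$). Hence $T(\omega)=|a|^2/(|a|^2+|b|^2{\zeta'}^2_M)$.

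For $R(\omega)$ I would evaluate the same branch at $n=0$: the first component is $\varphi(0;R)=z^{-1}$ — matching $(\ref{eq:boundary})$ — and the second is $\varphi(0;L)=\kappa$, with $|\kappa|=|b||\zeta'_{M-1}|/|\omega\zeta'_M-|a|\zeta'_{M-1}|$ and phase dictated by Theorem~\ref{lem:stationarystate}. Substituting into the definition of $R(\omega)$ and factoring out $z^{-1}b$, the combination $a\varphi(0;L)+b\varphi(0;R)$ becomes $z^{-1}b$ times $(\tilde\alpha a\,\zeta'_{M-1}+\omega\zeta'_M-|a|\zeta'_{M-1})/(\omega\zeta'_M-|a|\zeta'_{M-1})$, where $\tilde\alpha$ collects the relevant unit-modulus factor; the point is that $\overline{\alpha}\,a=|a|$ makes the two $|a|\zeta'_{M-1}$ terms cancel, leaving $z^{-1}b\,\omega\zeta'_M/(\omega\zeta'_M-|a|\zeta'_{M-1})$ up to a unimodular factor. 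Taking $|\cdot|^2$ and reusing the denominator identity above yields $R(\omega)=|b|^2{\zeta'}^2_M/(|a|^2+|b|^2{\zeta'}^2_M)$; as a consistency check $T(\omega)+R(\omega)=1$, which is flux conservation for the generalized eigenfunction $\Phi_\infty$ and could alternatively be used to deduce $R$ from $T$. Finally, the case $\omega\in\partial B$ (a finite set) follows either by continuity of $T,R$ in $\omega$, or by running the same computation with the second branch of Theorem~\ref{lem:stationarystate}, where $\zeta'_M$ is effectively replaced by $U_{M-1}(\pm1)=(\pm1)^{M-1}M$, so ${\zeta'}^2_M=M^2$ and the two formulas agree.

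The only genuine work will be the phase bookkeeping — verifying that all prefactors are unimodular and that it is exactly the cancellation $\overline{\alpha}\,a=|a|$ that collapses the numerator in $R(\omega)$ — together with invoking the Chebyshev identity $U_{M-1}^2-U_MU_{M-2}=1$ to rewrite the common denominator as $|a|^2+|b|^2{\zeta'}^2_M$; everything else is direct substitution into Theorem~\ref{lem:stationarystate}, and the relations $T+R=1$, $\varphi(0;R)=z^{-1}$, $\varphi(M-1;L)=0$ serve as useful internal checks along the way.
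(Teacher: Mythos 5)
Your overall route is the right one and is essentially the paper's: the paper offers no explicit proof of this corollary, but the intended argument is exactly direct substitution of Theorem~\ref{lem:stationarystate} into the defining formulas, with the whole computation hinging on the identity $\bigl|\omega\zeta'_M-|a|\zeta'_{M-1}\bigr|^2=|a|^2+|b|^2{\zeta'}^2_M$ — which is the same quantity $q(M)$ that the paper evaluates (via $\zeta'_{m+1}\zeta'_{m-1}={\zeta'}^2_m-1$ and the three-term recursion, i.e.\ your Tur\'an identity) in the proof of the relative-probability proposition. Your computation of $T(\omega)$ is complete and correct, as is the treatment of $\omega\in\partial B$ via ${\zeta'}^2_M\to M^2$.

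The one step you should not wave through is the phase bookkeeping in $R(\omega)$, because there the phase of $\varphi(0;L)$ relative to $a$ is load-bearing, not cosmetic. If you take Theorem~\ref{lem:stationarystate} literally, the second component of $\vec{\varphi}(0)$ carries the factor $\alpha=a/|a|$, and then the combination in the numerator is $a\alpha\,\zeta'_{M-1}=(a^2/|a|)\zeta'_{M-1}$, which does \emph{not} cancel against $-|a|\zeta'_{M-1}$ unless $a>0$; your stated cancellation uses $\overline{\alpha}\,a=|a|$, which requires the factor to be $\overline{\alpha}$, not $\alpha$. Recomputing $\kappa=-z^{-1}(T^{M-1})_{2,1}/(T^{M-1})_{2,2}$ directly from
\[
T=\frac{1}{az}\begin{bmatrix}\Delta|a|^2 & -\Delta a\bar b\\ -\Delta\bar a b & z^2+\Delta|b|^2\end{bmatrix},\qquad
(T^{M-1})_{2,1}=T_{2,1}\zeta_{M-1},\quad (T^{M-1})_{2,2}=\zeta_M-T_{1,1}\zeta_{M-1},
\]
one finds
\[
\kappa=\frac{z^{-1}\,\overline{\alpha}\,b\,\zeta'_{M-1}}{\omega\zeta'_M-|a|\zeta'_{M-1}},
\]
i.e.\ the correct phase is indeed $\overline{\alpha}$ and the printed $\alpha$ in the theorem appears to be a conjugation typo. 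With this, $a\kappa+bz^{-1}=z^{-1}b\,\omega\zeta'_M/(\omega\zeta'_M-|a|\zeta'_{M-1})$ and $R(\omega)=|b|^2{\zeta'}^2_M/(|a|^2+|b|^2{\zeta'}^2_M)$ follows as you say. So your identity is the right one, but as written your proof asserts it rather than derives it from the cited theorem; you should either verify $\kappa$ from the transfer-matrix recursion as above, or sidestep the phase entirely by proving $T(\omega)$ first and deducing $R(\omega)=1-T(\omega)$ from unitarity (which you already note as a consistency check).
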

Note that the unitarity of the time evolution can be confirmed by $T+R=1$.  
By Corollary~\ref{prop:tr}, we can find a necessary and sufficient conditions for the perfect transmitting; that is , $T=1$. 
\begin{corollary}
Assume $abcd\neq 0$.  
Let $\omega=e^{ik}$ with some real value $k$. Then the perfect transmitting happens if and only if 
        \[ \arccos\left(\frac{\cos k}{|a|}\right) \in \left\{ \frac{\ell}{M}\pi \;|\; \ell\in \{0,\pm 1,\dots,\pm(M-1)\} \right\}   . \]
On the other hand, the perfect reflection never occurs. 
\end{corollary}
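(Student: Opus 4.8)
The plan is to read off the perfect-transmitting condition directly from the formula for $T(\omega)$ in Corollary~\ref{prop:tr}. Since $abcd\neq 0$ we have $|b|^2>0$, so $T(\omega)=\dfrac{|a|^2}{|a|^2+|b|^2{\zeta'}_M^2}=1$ holds if and only if ${\zeta'}_M=U_{M-1}\!\left(\dfrac{\omega+\omega^{-1}}{2|a|}\right)=0$. Writing $\omega=e^{ik}$ gives $\dfrac{\omega+\omega^{-1}}{2|a|}=\dfrac{\cos k}{|a|}$, so the condition becomes $U_{M-1}\!\left(\dfrac{\cos k}{|a|}\right)=0$. Thus the whole statement reduces to locating the zeros of the Chebyshev polynomial $U_{M-1}$ and translating them back through the $\arccos$.

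Next I would invoke the classical fact that $U_{M-1}(\cos\vartheta)=\dfrac{\sin(M\vartheta)}{\sin\vartheta}$, whose zeros in $(-1,1)$ are exactly the points $\cos\vartheta$ with $\vartheta=\ell\pi/M$ for $\ell\in\{1,2,\dots,M-1\}$; equivalently, the $M-1$ real roots of $U_{M-1}$ are $\cos(\ell\pi/M)$, $\ell=1,\dots,M-1$. Here one must check that the argument $\dfrac{\cos k}{|a|}$ actually lies in $[-1,1]$ for the relevant $k$: when $|\cos k|\le |a|$ (i.e.\ $\omega\in B_{in}\cup\partial B$) it does, and the $\arccos$ in the statement is well-defined; when $|\cos k|>|a|$ ($\omega\in B_{out}$) the variable of $U_{M-1}$ exceeds $1$ in modulus, where $U_{M-1}$ has no zeros (it grows like a hyperbolic sine), so no perfect transmission occurs there and the $\arccos$ expression is vacuous. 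So perfect transmitting happens iff $\arccos\!\left(\dfrac{\cos k}{|a|}\right)=\ell\pi/M$ for some $\ell\in\{1,\dots,M-1\}$; allowing for the sign ambiguity of $k\mapsto \cos k$ and the symmetry of the zero set under $\vartheta\mapsto -\vartheta$ (and noting $\ell=0$ would force $\cos k=|a|$, i.e.\ $\omega\in\partial B$, where ${\zeta'}_M$ is computed by the degenerate branch and is nonzero), one records the index set as $\{0,\pm1,\dots,\pm(M-1)\}$ exactly as written; the $\ell=0$ and $|\ell|=M$ cases are excluded because there ${\zeta'}_M\ne 0$.

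For the second assertion, perfect reflection would mean $R(\omega)=1$, equivalently $T(\omega)=0$, equivalently $|a|^2=0$; but $abcd\neq0$ forces $|a|\ne 0$, so $T(\omega)\ge \dfrac{|a|^2}{|a|^2+|b|^2{\zeta'}_M^2}>0$ always, and perfect reflection never occurs. This part is immediate once the formulas of Corollary~\ref{prop:tr} are in hand.

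The only genuine subtlety — the ``hard part,'' such as it is — is the bookkeeping at the boundary set $\partial B$: there $T(\omega)$ is still given by the same quotient by continuity, but ${\zeta'}_M=U_{M-1}(\pm 1)=\pm M\ne 0$, so $T<1$ strictly on $\partial B$, which is why the index $\ell$ must be confined to $\{0,\pm1,\dots,\pm(M-1)\}$ with the endpoints $\pm M$ omitted, and why including $\ell=0$ is harmless (it lands in $\partial B$ and contributes nothing). I would state this explicitly to justify the precise index range. Everything else is a direct substitution into the already-established Corollary~\ref{prop:tr} together with the standard description of the zeros of $U_{M-1}$.
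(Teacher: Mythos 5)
Your proof is correct and follows exactly the route the paper intends: the corollary is stated there without a separate proof, as an immediate consequence of the formula for $T(\omega)$ in Corollary~\ref{prop:tr} ($T=1$ iff ${\zeta'}_M=U_{M-1}(\cos k/|a|)=0$) together with the standard zeros of $U_{M-1}$, and $R=1$ being impossible since $|a|\neq 0$. Your side remark that $\ell=0$ lands on $\partial B$ where $|U_{M-1}(\pm 1)|=M\neq 0$, so that index does not actually yield perfect transmission, is a fair observation about the statement's index set but does not affect the substance of the argument.
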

Remark that if $\omega\notin B_{in}$, then the perfect transmitting never happens. 

\subsection{Energy in the perturbed region}
Taking the square modulus to $\vec{\varphi}(n)$ in Theorem~\ref{lem:stationarystate}, the relative probability at position $n\in\{0,\dots,M-1\}$ can be computed as follows. 
\begin{proposition}
Assume $abcd\neq 0$. Then the relative probability is described by 
\begin{equation}\label{eq:relativeprob}
    ||\vec{\varphi}(n)||^2
    = \begin{cases}
    \frac{1}{|a|^2+|b|^2{\zeta'}^2_M} \left(|a|^2+|b|^2{\zeta'}^2_{M-n-1}+|b|^2 {\zeta'}^2_{M-n}\right) & \text{: $\omega \notin \partial B$}\\
    \\
    \frac{1}{|a|^2+M^2|b|^2} \left\{ |a|^2+|b|^2(M-n)^2+|b|^2(M-n-1)^2 \right\} & \text{: $\omega\in \partial B$}
    \end{cases}
\end{equation}
\end{proposition}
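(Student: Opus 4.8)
The plan is to insert into $||\vec{\varphi}(n)||^2=|\varphi(n;R)|^2+|\varphi(n;L)|^2$ the two explicit expressions for $\vec{\varphi}(n)$ furnished by Theorem~\ref{lem:stationarystate} and to simplify; the only non-bookkeeping input will be one elementary Chebyshev identity. In each case $\vec{\varphi}(n)$ is a scalar times a two-component vector, so the first step is to read off the modulus of that scalar. This is easy once one notes that $|z|=1$ (since $z=e^{-i\xi}$), $|\alpha|=1$ (since $\alpha=a/|a|$), and $|\Delta|=1$, the last because $\Delta=\det(P+Q)=\det C$ and $C$ is unitary; hence $|(\alpha\Delta^{-1/2})^{-n}|=1$ and $|\lambda|=1$. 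Consequently, for $\omega\notin\partial B$ the squared prefactor equals $1/|\omega\zeta'_M-|a|\zeta'_{M-1}|^2$, and for $\omega\in\partial B$ it equals $1/|\epsilon_R|a|+i\epsilon_I M|b||^2$, so in both cases $||\vec{\varphi}(n)||^2$ is a ratio whose numerator and denominator I then expand separately.

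The case $\omega\in\partial B$ is immediate and needs no special identity. On $\partial B$ one has $\mathrm{Re}(\omega)=\pm|a|\neq0$ and $\mathrm{Im}(\omega)^2=1-|a|^2=|b|^2\neq0$ (this is where $abcd\neq0$ is used: it guarantees both are nonzero), so $\epsilon_R,\epsilon_I\in\{\pm1\}$ and $\epsilon_R^2=\epsilon_I^2=1$. Therefore $|\epsilon_R|a|+i\epsilon_I M|b||^2=|a|^2+M^2|b|^2$, $|\epsilon_R\alpha(\epsilon_R|a|+i\epsilon_I|b|(M-n))|^2=|a|^2+|b|^2(M-n)^2$, and $|b(M-n-1)|^2=|b|^2(M-n-1)^2$. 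Summing the latter two and dividing by the former yields exactly the $\omega\in\partial B$ line of (\ref{eq:relativeprob}).

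For $\omega\notin\partial B$ I would write $\omega=e^{ik}$, so that the Chebyshev argument $x:=(\omega+\omega^{-1})/(2|a|)=\cos k/|a|$ is real and every $\zeta'_m=U_{m-1}(x)$ is real. Using $\omega+\bar\omega=2\cos k$ and $|\omega|=1$ gives, for any index $m$,
\[ |\omega\zeta'_m-|a|\zeta'_{m-1}|^2=(\zeta'_m)^2-2|a|\cos k\,\zeta'_m\zeta'_{m-1}+|a|^2(\zeta'_{m-1})^2; \]
adding $|b|^2(\zeta'_{m-1})^2$ to the numerator (with $m=M-n$) and using $|a|^2+|b|^2=1$ turns it into $(\zeta'_m)^2-2|a|\cos k\,\zeta'_m\zeta'_{m-1}+(\zeta'_{m-1})^2$. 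The key identity is
\[ U_{m-1}(x)^2-2x\,U_{m-1}(x)U_{m-2}(x)+U_{m-2}(x)^2=1, \]
which follows by combining the recursion $U_{m-2}=2xU_{m-1}-U_m$ with the Tur\'an-type determinant relation $U_{m-1}(x)^2-U_m(x)U_{m-2}(x)=1$; being a polynomial identity it also holds for $|x|>1$ (i.e.\ $\omega\in B_{out}$) and for the edge indices $\zeta'_0=U_{-1}=0$, $\zeta'_1=U_0=1$. Applying it with $x=\cos k/|a|$, one writes both numerator and denominator in the form $|a|^2\bigl((\zeta'_m)^2-2x\zeta'_m\zeta'_{m-1}+(\zeta'_{m-1})^2\bigr)+(1-|a|^2)(\cdots)$, in which the first bracket collapses to $|a|^2$; this leaves $|\omega\zeta'_M-|a|\zeta'_{M-1}|^2=|a|^2+|b|^2(\zeta'_M)^2$ for the denominator and $|a|^2+|b|^2(\zeta'_{M-n})^2+|b|^2(\zeta'_{M-n-1})^2$ for the numerator, which is the $\omega\notin\partial B$ line of (\ref{eq:relativeprob}). (As a consistency check, the two lines agree in the limit $x\to\pm1$, where $(\zeta'_m)^2\to m^2$.)

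I do not expect a genuine obstacle: the argument is bookkeeping. The two points that need care are (i) verifying that every scalar factor has modulus one, which ultimately rests on the unitarity of $C$ through $|\det C|=1$ and $|a|^2+|b|^2=1$, and (ii) invoking the Chebyshev identity uniformly on $B_{in}$, on $B_{out}$, and (as a limit) on $\partial B$, including the degenerate low-index values of $\zeta'_m$.
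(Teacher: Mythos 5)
Your proof is correct and follows essentially the same route as the paper: both arguments reduce to showing $|\omega\zeta'_m-|a|\zeta'_{m-1}|^2=|a|^2+|b|^2{\zeta'}^2_m$, your Tur\'an-type identity $U_{m-1}^2-2xU_{m-1}U_{m-2}+U_{m-2}^2=1$ being exactly the combination of the recursion and the identity ${\zeta'}_{m+1}{\zeta'}_{m-1}={\zeta'}^2_m-1$ that the paper uses. The only difference is that you also write out the $\omega\in\partial B$ case, which the paper leaves as a direct computation.
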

\begin{proof}
Let us consider the case for $\omega \notin \partial B$. 
Using the property of the Chebyshev polynomial, we have ${\zeta}'_{m+1}{\zeta}'_{m-1}={\zeta'}^2_{m}-1$ and $(\omega+\omega^{-1})/|a| \cdot \zeta_m'=\zeta'_{m+1}+\zeta'_{m-1}$. 
It holds that 
    \begin{align*}
        (\omega+\omega^{-1})\zeta_m'\zeta_{m-1}' 
        &= |a| (\zeta'_{m+1}+\zeta'_{m-1}) \zeta'_{m-1} \\
        &= |a|({\zeta'}^2_{m}+{\zeta'}^2_{m-1}-1).
    \end{align*}
Since $\zeta'_m\in \mathbb{R}$, we have 
    \begin{align*} 
    q(m):=| \omega \zeta'_m-|a| \zeta'_{m-1} |^2 
    &= {\zeta'}^2_m+|a|^2{\zeta'}^2_{m-1}-|a|^2 (\omega+\omega^{-1}) \zeta'_{m}\zeta'_{m-1}\\
    &= |b|^2 {\zeta'}^2_{m}+|a|^2, 
    \end{align*}
Then we have 
    \begin{align*}
        ||\vec{\varphi}(n)||^2 &=  
        \frac{1}{q(M)}(q(M-n)+|b|^2{\zeta'}^2_{M-n-1}) \\
        &= \frac{|b|^2{\zeta'}^2_{M-n}+|a|^2+|b|^2 {\zeta'}^2_{M-n-1}}{|b|^2{\zeta'}^2_M+|a|^2}.
    \end{align*}
\end{proof}
Then we can see how much quantum walkers accumulate in the perturbed region $\{0,\dots,M-1\}$ by 
    \[ \mathcal{E}_M(\omega) =:  \sum_{n=0}^{M-1}||\vec{\varphi}(n)||^2.  \]
We call it the energy of quantum walk. 
\begin{corollary}\label{cor:energy}
Let $\mathcal{E}_M(\omega)$ be the above and 
assume $abcd\neq 0$. Then we have 
    \begin{equation}\label{eq:energy1}
    \mathcal{E}_M(\omega) = 
\red{\frac{1}{|a|^2+|b|^2{\zeta'}^2_M} \left\{ M|a|^2 + \frac{|b|^2}{(\lambda_+-\lambda_-)^2} \left( {\zeta'}^2_{M+1}-{\zeta'}^2_{M-1}-4M \right)  \right\}} 
    \end{equation}
In particular, $\mathcal{E}_M(\cdot)$ is continuous at every $\omega_*\in \partial B$ and  
    \[ \mathcal{E}_M(\omega_*)=\frac{1}{3}\; \frac{M}{|a|^2+|b|^2M^2}\left(3|a|^2+|b|^2+2|b|^2M^2\right). \]
\end{corollary}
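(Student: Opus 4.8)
The plan is to obtain \eqref{eq:energy1} by summing the relative-probability formula \eqref{eq:relativeprob} over $n\in\{0,\dots,M-1\}$ in the generic regime $\omega\notin\partial B$, and then to read off the behaviour at $\partial B$ from the fact that \eqref{eq:relativeprob} is, in both of its cases, a single rational function of the real parameter $x:=\cos k/|a|$ (writing $\omega=e^{ik}$, which is forced since $|\Delta|=1$).

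For the generic case I would insert the first branch of \eqref{eq:relativeprob}, factor out the common denominator $|a|^2+|b|^2{\zeta'}^2_M$, and reindex the two resulting sums as $\sum_{n=0}^{M-1}{\zeta'}^2_{M-n-1}=\sum_{j=0}^{M-1}{\zeta'}^2_j$ and $\sum_{n=0}^{M-1}{\zeta'}^2_{M-n}=\sum_{j=1}^{M}{\zeta'}^2_j$; since $\zeta'_0=U_{-1}(\cdot)=0$ their combination is $2\sum_{j=1}^{M-1}{\zeta'}^2_j+{\zeta'}^2_M$, so that
\[
\mathcal{E}_M(\omega)=\frac{M|a|^2+|b|^2\big(2\sum_{j=1}^{M-1}{\zeta'}^2_j+{\zeta'}^2_M\big)}{|a|^2+|b|^2{\zeta'}^2_M}.
\]
It then remains to put $2\sum_{j=1}^{M-1}{\zeta'}^2_j+{\zeta'}^2_M$ in closed form. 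Using the Binet-type expression $\zeta'_m=(\lambda_+^m-\lambda_-^m)/(\lambda_+-\lambda_-)$ from the Remarks after Lemma~\ref{lem:nthpower}, together with $\lambda_+\lambda_-=1$, one has ${\zeta'}^2_m=(\lambda_+^{2m}+\lambda_-^{2m}-2)/(\lambda_+-\lambda_-)^2$. Writing $p=\lambda_+^2,\ q=\lambda_-^2$ (so $pq=1$ and $(\lambda_+-\lambda_-)^2=p+q-2$), each $\sum_j p^j$ is geometric; multiplying through by $p+q-2$ and repeatedly substituting $p^jq=p^{j-1}$ makes the sum telescope, and a short bookkeeping check gives
\[
(p+q-2)\Big(2\sum_{j=1}^{M-1}(p^j+q^j)+p^M+q^M-4M+2\Big)=p^{M+1}+q^{M+1}-p^{M-1}-q^{M-1}-4M(p+q-2),
\]
whose right-hand side is exactly $(\lambda_+-\lambda_-)^2\big({\zeta'}^2_{M+1}-{\zeta'}^2_{M-1}-4M\big)$. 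Dividing by $(\lambda_+-\lambda_-)^2$ and substituting back produces \eqref{eq:energy1}.

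For the assertions at $\partial B$, the observation is that on the unit circle $\zeta'_m=U_{m-1}(x)$ with $x=\cos k/|a|$, and that both branches of \eqref{eq:relativeprob} are one and the same rational function of $x$: the second branch is the first evaluated at $x=\pm 1$, where $U_{m-1}(\pm1)^2=m^2$, which is precisely where $\partial B$ sits ($|\cos k|=|a|$). Since this rational function has denominator $|a|^2+|b|^2U_{M-1}(x)^2\ge|a|^2>0$, each $\|\vec{\varphi}(n)\|^2$ — and hence the finite sum $\mathcal{E}_M$ — is continuous in $\omega$, in particular at every $\omega_*\in\partial B$. Its value there comes from summing the second branch: with $\sum_{k=1}^{M}k^2=M(M+1)(2M+1)/6$ and $\sum_{k=0}^{M-1}k^2=(M-1)M(2M-1)/6$ one gets $\sum_{n=0}^{M-1}\big((M-n)^2+(M-n-1)^2\big)=M(2M^2+1)/3$, whence $\mathcal{E}_M(\omega_*)=\tfrac{M}{3(|a|^2+|b|^2M^2)}\big(3|a|^2+|b|^2+2|b|^2M^2\big)$, as claimed. (Equivalently one may let $x\to\pm1$ in \eqref{eq:energy1}, using that $U_M(x)^2-U_{M-2}(x)^2-4M$ vanishes at $x=\pm1$, so its quotient by $(\lambda_+-\lambda_-)^2=4(x^2-1)$ extends to a polynomial which one evaluates there.)

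The step I expect to cause the most friction is the telescoping identity: it is elementary, but the cancellations must be organized carefully — both the additive $-2$ inside each ${\zeta'}^2_m$ and the cross terms $p^jq=p^{j-1}$ have to be tracked — and one wants to land precisely on the combination ${\zeta'}^2_{M+1}-{\zeta'}^2_{M-1}$ so that \eqref{eq:energy1} emerges in the stated compact form rather than an equivalent but bulkier one. Everything else — the reindexing, the continuity via rationality in $x$, and the closing finite sum — is routine.
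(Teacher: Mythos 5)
Your proof is correct and follows essentially the same route as the paper: both reduce the energy to a closed form for $\sum_j {\zeta'}^2_j$ via the Binet representation $\zeta'_m=(\lambda_+^m-\lambda_-^m)/(\lambda_+-\lambda_-)$ with $\lambda_+\lambda_-=1$ (your telescoping identity in $p=\lambda_+^2$, $q=\lambda_-^2$ is the paper's Chebyshev identity $\sum_{n=0}^{m-1}{\zeta'}^2_n=\big({\zeta'}^2_m-{\zeta'}^2_{m-1}-2m+1\big)/(\lambda_+-\lambda_-)^2$ in disguise, and your identity checks out). The only difference is cosmetic: for continuity at $\partial B$ you observe that each $\|\vec{\varphi}(n)\|^2$ is a single rational function of $x=\cos k/|a|$ with denominator bounded below by $|a|^2$, whereas the paper Taylor-expands the final formula around $x=1$; both amount to the same removable-singularity computation and yield the stated value of $\mathcal{E}_M(\omega_*)$.
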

\begin{proof}
Using the properties of the Chebyshev polynomial for example, $U_n^2-U_{n+1}U_{n-1}=1$, $T_n=(U_{n}-U_{n-2})/2$, we have 
    \[ (\lambda_+^{m-1}+\lambda_-^{m-1})\zeta_M'=2 T_{m-1}U_{m-1}={\zeta'}^2_m-{\zeta'}^2_{m-1}+1. \]
Then we have 
    \begin{align}
    \sum_{n=0}^{m-1} {\zeta_n'}^2 
    &= \sum_{n=0}^{m-1} \left(\frac{\lambda_+^m-\lambda_-^m}{\lambda_+-\lambda_-}\right)^2 \notag \\
    &= \frac{1}{(\lambda_+-\lambda_-)^2} \left\{  (\lambda_+^{m-1}+\lambda_-^{m-1}) \zeta'_m-2m \right\} \notag\\
    &= \frac{1}{(\lambda_+-\lambda-)^2} ({\zeta'}^2_m-{\zeta'}^2_{m-1}-2m+1) \label{eq:sum}
    \end{align}
Then we have 
    \begin{align*}
        \sum_{n=0}^{M-1} ||\vec{\varphi}(n)||^2 
        &= \frac{1}{|a|^2+|b|^2{\zeta'}^2_M} \left( M|a|^2 +|b|^2 \sum_{n=0}^{M-1} {\zeta'}^2_{M-n-1}+{\zeta'}^2_{M-n} \right) \\
        &= \frac{1}{|a|^2+|b|^2{\zeta'}^2_M} \left\{ M|a|^2 + \frac{|b|^2}{(\lambda_+-\lambda_-)^2} \left( {\zeta'}^2_{M+1}-{\zeta'}^2_{M-1}-4M \right)  \right\}
    \end{align*}
Here we used (\ref{eq:sum}) in the last equality. 

If $\omega\in \partial B$, then by directly computation taking summation of (\ref{eq:relativeprob}) over $n\in\{0,1,\dots,{M-1}\}$, we obtain the conclusion. 
Let us see $\mathcal{E}_M(\cdot)$ is continuous at $\partial B$. 
We put $x:= (1/|a|)\cos k$ and $\zeta'_m(x):=\zeta'_m$. Remark that $\omega\to \omega_*$ implies $|x|\to 1$. In the following, we consider $x\to 1$ case. 
The Taylor expansion of $\zeta'_m(x)$ around $x=1$ is 
    \[ \zeta'_m(1-\epsilon) = m-\frac{m}{3}(m^2-1)\epsilon + O(\epsilon^2). \]
The reason for obtaining the expansion until $\epsilon^1$ order is 
    \[ {\zeta'}^2_{M+1}-{\zeta'}^2_{M-1}-4M=O(\epsilon^2).  \]
around $x=1$. 
Note that $(\lambda_+-\lambda_-)^2=4(x^2-1)$. Then 
\[ (\lambda_+-\lambda_-)^2=-8\epsilon+O(\epsilon) \]
around $x=1$. 
Then inserting all of them into (\ref{eq:energy1}), we obtain 
    \[ \lim_{\omega\to \omega_*}\mathcal{E}_M(\omega)=
    \frac{M}{|a|^2+|b|^2M^2} \left( |a|^2+\frac{|b|^2}{3}+\frac{2|b|^2}{3}M^2 \right). \]
\end{proof}

\section{Asymptotics of Energy}
If $\omega\in \partial B$, then by Corollary~\ref{cor:energy}, it is immediately obtained that  
    \begin{equation}\label{eq:partialBlimit} \lim_{M\to\infty}\frac{\mathcal{E}_M(\omega)}{M}=\frac{2}{3}. 
    \end{equation}
    
\noindent 
Let us consider the case of $\omega\in B_{in}\cup B_{out}$ as follows. 
Note that 
\[ \lambda_{\pm} = 
\begin{cases} 
\mathrm{sgn}(\cos k)e^{\pm \theta} & \text{: $\omega\in B_{out}$,}\\ 
e^{\pm i\theta} & \text{: $\omega\in B_{in}$,}
\end{cases} 
\]
where $(1/|a|)\cos k= \cosh \theta$ ($\omega\in B_{out}$), while $(1/|a|)\cos k= \cos \theta$ ($\omega\in B_{in}$) such that $\sin \theta>0$ and $\sinh \theta>0$.  
To observe the asymptotics of $\mathcal{E}_M(\omega)$ for $\omega\notin \partial B$, we rewrite $\mathcal{E}_M(\omega)$ as follows: 
    \begin{align}\label{eq:Binenergy} 
    \mathcal{E}_M(\omega)
    =\begin{cases}
    \red{\frac{1}{|a|^2\sinh^2\theta +|b|^2\sinh^2 M\theta} \left\{ (-|b|^2+|a|^2\sinh^2\theta)M+\frac{|b|^2}{4}\frac{\sinh 2M\theta \sinh 2\theta}{\sinh^2\theta} \right\}} & \text{: $\omega\in B_{out}$} \\
        \\
   \red{\frac{1}{|a|^2\sin^2\theta +|b|^2\sin^2 M\theta} \left\{ (|b|^2+|a|^2\sin^2\theta)M-\frac{|b|^2}{4}\frac{\sin 2M\theta \sin 2\theta}{\sin^2\theta} \right\}} & \text{: $\omega\in B_{in}$} \\
    \end{cases}
    \end{align}

In the following, we regard $\mathcal{E}_M(\omega)$ as a function of $\theta$, $M$; that is $\mathcal{E}(M,\theta)$ because $\theta$ can be expressed by $\omega$ and consider the asymptotics for large $M$. 

\subsection{$\omega\in B_{out}$}
Let us see that 
\begin{equation}\label{case1Bout} 
\lim_{M\to\infty}\mathcal{E}_M(\omega)= \frac{\cosh \theta}{\sinh \theta}=\frac{\left|\frac{\cos k}{a}\right|}{\sqrt{\left|\frac{\cos k}{a}\right|^2-1}}. 
\end{equation}
Note that $\sinh M\theta\sim e^{M\theta}/2\gg M$. Then by (\ref{eq:Binenergy}), we have 
\begin{align*}
    \mathcal{E}_M(\omega) &\sim \frac{1}{|b|^2 e^{2M\theta}} \times \frac{|b|^2}{4} \frac{e^{2M\theta} \sinh 2\theta}{\sinh^2 \theta}
    = \frac{\cosh \theta}{\sinh \theta}. 
\end{align*}
By (\ref{case1Bout}), if $\omega\to \omega_*\in \partial B$, then $\mathcal{E}_M(\omega)\sim 1/\theta \to \infty$. 
To connect it to the limit for the case of $\omega_*\in\partial B$ described by (\ref{eq:partialBlimit}) continuously, we consider $M\to \infty$ and $\theta\to 0$ simultaneously, so that $M\theta \sim \theta_*\in (0,\infty)$. 
Let us see that 
\begin{equation}
    \mathcal{E}_M(\omega)
    \sim \frac{1}{\sinh^2\theta_*}\left( \red{-1}+\frac{\sinh 2\theta_*}{2\theta_*} \right)M
\end{equation}
Noting that $\sinh m \theta=\sinh m\theta_*\neq 0$, for $m=1,2$ and $\sinh\theta \sim \theta_*/M$, we have 
    \begin{align*}
       \mathcal{E}_M(\omega)
       &\sim \frac{1}{|b^2|\sinh^2 \theta_*} \left\{ \red{-}|b|^2M+\frac{|b|^2}{4}\red{ \frac{\sinh 2\theta_* \times (2\theta_*/M)}{(\theta_*/M)^2} } \right\} \\
        &= \frac{1}{\sinh^2\theta_*}\left( \red{-1}+\frac{\sinh 2\theta_*}{2\theta_*} \right)M 
    \end{align*}
Therefore if we design the parameter $\theta_*$  so that  
    \begin{equation}  \frac{2}{3}=\frac{1}{\sinh^2\theta_*}\left( \red{-1}+\frac{\sinh 2\theta_*}{2\theta_*} \right), 
    \end{equation}
then the energy of $B_{out}$ continuously closes to that of $\partial B$ in the sufficient large system size $M$.

\subsection{$\omega\in B_{in}$}
In this paper, since we determine $\theta$ satisfying $\sin \theta>0$, we set $\theta\in (0,\pi)$. 
Remark that $\mathcal{E}_M(\omega^{-1})=\mathcal{E}_M(\omega)$ for any $\omega\in B_{in}$ because $e^{i\theta}$ is invariant under this deformation. 

By (\ref{eq:Binenergy}), if $\sin\theta\asymp \sin M\theta \asymp 1$, we have 
\begin{equation}\label{eq:Mlarge}
\mathcal{E}_M(\omega)\sim \left(\frac{|a|^2\sin^2\theta+|b|^2}{|a|^2\sin^2\theta+|b^2|\sin^2M\theta}\right)M, 
\end{equation}
for sufficiently large $M$, 
which implies that 
    \begin{equation}\label{eq:case1Bin} 
    M \lesssim \mathcal{E}_M(\omega) \lesssim 
    \red{\left( 1+\frac{|b|^2}{\sin^2 \theta} \right)} M 
    \end{equation}
if $\theta\notin \{0,\pi\}$ is fixed. 
Then we conclude that $\mathcal{E}_M(\omega)=O(M)$ if $\theta\notin \mathbb{Z}\pi$ is fixed for $\omega\in B_{in}$. 
On the other hand, if we design $\theta$ so that the condition of the perfect transmitting is satisfied; $\theta=\pi\ell /M$, $|\ell|\in\{1,\dots,M-1\}$ (see Corollary~\ref{prop:tr}) and choose $\ell$ which is very close to $0$ or $M$, then $|\sin\theta|\ll 1$. 
Note that if $|\sin \theta|\to 0$, which means $\omega\to \omega_*\in\partial B$,  then the coefficient of the upper bound in (\ref{eq:case1Bin}) diverges. 

Then from now on, let us consider the following three cases having a magnitude relation between $\theta$ and $M$; 
\begin{center} 
(i) $1\ll M\ll 1/\sin\theta$;\;\; (ii) $M \asymp 1/\sin \theta$;\;\; (iii) $1/\sin\theta \ll M$. \end{center}

\begin{enumerate}
\item Case (i): $1\ll M\ll 1/\sin\theta$\\
Let us start to evaluate RHS of (\ref{eq:Binenergy}). Since 
    \[ \frac{\sin 2M\theta \sin 2\theta}{4\sin^2 \theta} \sim M\left\{ 1-\frac{1}{3}(1+2M^2)\theta \right\}, \]
the ``$\{\;\}$" part in RHS of (\ref{eq:Binenergy}) can be evaluated by $2|b|^2M^3\theta^2/3$. The denominator of (\ref{eq:Binenergy}) is evaluated by $1/(|b|^2M^2\theta^2)$. Combining them, we have 
\red{\begin{equation} \mathcal{E}_M(\omega)\sim \frac{2M}{3} \end{equation}}
This is consistent with (\ref{eq:partialBlimit}). 

\item Case (ii): $M\asymp 1/|\sin\theta|$\\
Under this condition, the parameter $\theta$ lives around $0$ or $\pi$ if $M$ is large. Since we consider $\theta\in(0,\pi)$, we can evaluate $\sin\theta$ by $\sin\theta\sim  \theta$, or $\sin\theta\sim  (\pi-\theta)$ for large $M$.  
We define $\theta'=\theta$ if $0<\theta<\pi/2$ and  $\theta'=\pi-\theta$ if $\pi/2\leq \theta<\pi$. 
Because $M\sin\theta \asymp 1$ by the assumption, we have $M\theta'\asymp 1$. 
So we put $M\theta'=\theta_*+\epsilon$ with $\theta_*\asymp 1$ and $|\epsilon|\ll 1$. Then up to the value $\theta_*$, let us see   
\begin{equation} \mathcal{E}_M(\omega)
\sim  \begin{cases} \frac{1}{\sin^2\theta_*}\left(1\red{-\frac{\sin 2\theta_*}{2\theta_*}}\right) M & \text{: $\theta_*\notin \mathbb{Z}\pi$,} \\
\frac{|b|^2}{|a|^2\theta_*^2}M^3 & \text{: $\theta_*\in \mathbb{Z}\pi$ and $\epsilon M \ll 1$} \\
\frac{M}{\epsilon^2} & \text{: $\theta_*\in\mathbb{Z}\pi$ and  $\epsilon M\gg 1$}
\end{cases}
\end{equation}
Note that if $\theta_*\notin \mathbb{Z}\pi$, then
$\sin \theta=\sin \theta' \sim \theta_*/M$ and $\sin^2 M\theta=\sin^2 M\theta' \sim \sin^2 \theta_*\neq 0$, $\sin 2M \theta= \sin 2M\theta'\sim \sin 2\theta_*$ and so on. 
Inserting them into (\ref{eq:Binenergy}), we have 
\begin{align*} 
\mathcal{E}_M(\omega)
& \sim \frac{1}{|a|^2\theta_*^2/M^2+|b|^2\sin^2 \theta_*}\left\{ (|a|^2\theta_*^2/M^2+|b|^2)M\red{-\frac{|b|^2}{4} \frac{\sin 2\theta_*\cdot 2\theta_*/M}{\theta_*^2/M^2} }  \right\} \\ 
& \sim \frac{1}{\sin^2\theta_*}\left( 1- \frac{\sin 2\theta_*}{2\theta_*}  \right)M
\end{align*}
On the other hand, if $\theta_*\in \mathbb{Z}\pi$, since $\sin \theta\sim \theta_*/M$ and $\sin M\theta_* \sim \epsilon$, by (\ref{eq:Binenergy}), we have 
\begin{align*} 
\mathcal{E}_M(\omega) 
&\sim \frac{1}{|a|^2\theta^2+|b|^2\epsilon^2} \left\{ |b|^2M-\red{\frac{|b|^2}{4} \frac{2\epsilon \cdot 2 \theta_*/M}{(\theta_*/M)^2}} \right\}\\
& \sim \frac{|b|^2M}{|a|^2{\theta'}^2+|b|^2\epsilon^2}  \\
& \sim 
\begin{cases}
\frac{|b|^2}{|a|^2\theta_*^2}M^3 & \text{: $\epsilon\ll \theta_*/M$}\\
M/\epsilon^2 & \text{: $\epsilon\gg \theta_*/M$}
\end{cases}
\end{align*}

\item Case (iii): $ 1/|\sin\theta| \ll M$ \\
The ``$\{\;\;\}$" part in (\ref{eq:Binenergy}) is estimated by $(|b|^2+|a|^2\sin^2\theta)M$ because $M\theta \gg 1$. 
Then we have 
\begin{equation}\label{eq:caseiii} 
\mathcal{E}_M(\omega)\sim \left(\frac{|a|^2\sin^2\theta+|b|^2}{|a|^2\sin^2\theta+|b^2|\sin^2M\theta}\right)M, 
\end{equation}
for sufficiently large $M$ which is the same as (\ref{eq:Mlarge}). 
Let us consider the following case study:
\[ \text{(a)}\; \max\{|\sin \theta|,|\sin M\theta|\} \asymp 1;\;\;\text{(b)}\; |\sin\theta|,|\sin M\theta|\ll 1. \]
\begin{enumerate}
\item  Let us see $\mathcal{E}_M(\omega)=O(M)$ in this case.  
If $\sin \theta\asymp \sin \theta M \asymp 1$, then the coefficient of $M$ in (\ref{eq:caseiii}) is a finite value, then we have (\ref{eq:case1Bin}).  
On the other hand, if each of $\sin \theta$ or $\sin M\theta\ll1$, then (\ref{eq:caseiii}) implies 
\begin{align}
\mathcal{E}_M(\omega)\sim
\begin{cases}
\frac{1}{\sin^2 M\theta}M & \text{: $\sin \theta\ll \sin M\theta \asymp 1$} \\
(1+\frac{|b|^2}{|a|^2\sin^2 \theta})M & \text{: $\sin M\theta\ll \sin \theta\asymp 1$}
\end{cases}
\end{align}
\item Since $|\sin M\theta|\ll 1$, we evaluate $|\sin M\theta|$ by  
    \[ |\sin M\theta|\sim \min\{ |M\theta|,\;|\pi-M\theta|,\dots,|M\pi -M\theta| \}=:\delta. \]
Then there exists a natural number $m$ such that $|\theta-m\pi/M|=\delta/M$. 
Note that $|\sin \theta|$ is also sufficiently small. Then the natural number $m$ must be $m/M\ll 1$ if $0<\theta<\pi/2$ and $(M-m)/M \ll 1$ if $\pi/2 \leq \theta <\pi$.  
Putting $m':=\min\{m,M-m\}$, we have 
\[ |\sin\theta| \sim |\frac{m'}{M}\pi \pm \frac{\delta}{M}|\sim \frac{\delta}{M}. \]
Therefore $|\sin\theta|\ll |\sin M\theta |\ll 1$ holds. 
Then (\ref{eq:caseiii}) implies 
    \[ \mathcal{E}_M(\omega)\sim \frac{M}{\delta^2}. \]
\end{enumerate}

\end{enumerate}

We summarize the above statements in the following theorem by setting $\theta=O(1/M)$, $\epsilon=1/M^\alpha$ as a special but natural design of the parameters. 
\begin{theorem}
Let us set $\omega\in B_{in}$ so that 
\[ \theta=\theta(M)=\left(x\pi+\frac{1}{M^\alpha}\right)\frac{1}{M} \]
with the parameters $x\in(0,M)\subset \mathbb{R}$ and $\alpha\geq 0$. 
If $x \to 0$ or $x\to M$ with fixed $M$, then  $\mathcal{E}_M(\omega)=O(M)$.  
On the other hand, if we take $M\to\infty$ and fix $x'=\min\{x,M-x\}\asymp 1$, then we have 
\[ \mathcal{E}_M(\omega)=\begin{cases} O(M^3) & \text{: $x'$ is natural number and $\alpha\geq 1$,}\\ O(M^{1+2\alpha}) & \text{: $x'$ is natural number and $0\leq \alpha<1$,}\\ O(M) & \text{: otherwise.} \end{cases} \]
\end{theorem}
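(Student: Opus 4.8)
The plan is to read the theorem off the explicit formula~(\ref{eq:Binenergy}) for $\mathcal E_M(\omega)$ on $B_{in}$, using the same reduction to $\theta'$ already carried out in Case~(ii) above. I would begin by recording two symmetries that allow one to assume $\theta$ small: $\mathcal E_M(\omega^{-1})=\mathcal E_M(\omega)$, and — reading off (\ref{eq:Binenergy}), where $\sin^2\theta$, $\sin^2M\theta$ and the product $\sin2M\theta\,\sin2\theta$ are each invariant under $\theta\mapsto\pi-\theta$ — the fact that $\mathcal E_M$ on $B_{in}$ depends on $\theta$ only through $\theta'=\min\{\theta,\pi-\theta\}$. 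Thus it suffices to work with $x=x'$, so that $M\theta'=x'\pi+\sigma M^{-\alpha}$ with $\sigma\in\{1,-1\}$; this is exactly the regime $M\asymp1/\sin\theta$ of Case~(ii) with $\theta_*=x'\pi$ and $\epsilon=\sigma M^{-\alpha}$, and the three alternatives of the theorem match the three bullets there: $\theta_*\notin\mathbb Z\pi$ ($x'\notin\mathbb N$); $\theta_*\in\mathbb Z\pi$ with $|\epsilon|M\to0$ ($x'\in\mathbb N$, $\alpha>1$); and $\theta_*\in\mathbb Z\pi$ with $|\epsilon|M\to\infty$ ($x'\in\mathbb N$, $\alpha<1$), the borderline $\alpha=1$ being subsumed by the unified estimate below. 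So the remaining work is to turn those bullets into honest $O(\cdot)$ bounds for the specific design $\epsilon=\pm M^{-\alpha}$.

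The first assertion (fixed $M$, $x\to0$ or $x\to M$) is a continuity remark and I would phrase it so. For fixed $M$, formula~(\ref{eq:Binenergy}) is continuous in $\theta$ on $(0,\pi)$ — its denominator $|a|^2\sin^2\theta+|b|^2\sin^2M\theta$ vanishes only at $\theta\in\mathbb Z\pi$, which is excluded there — and it extends continuously to $\theta\in\{0,\pi\}$ by Corollary~\ref{cor:energy}. Since $x\to0$ forces $\theta\to M^{-1-\alpha}\in(0,\pi)$ and $x\to M$ forces $\theta\to\pi$, $\mathcal E_M(\omega)$ converges in both cases to a finite value (to the value of~(\ref{eq:Binenergy}) at $\theta=M^{-1-\alpha}$, respectively to $\mathcal E_M(\omega_*)$), hence is $O(M)$ for that $M$. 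The point worth stressing is the contrast that follows: keeping $x'$ a fixed natural number while letting $M\to\infty$ keeps $\theta$ on, or microscopically near, a zero of $D:=|a|^2\sin^2\theta+|b|^2\sin^2M\theta$ for every $M$, and this is the mechanism of the blow-up.

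For the $M\to\infty$ assertion I would first observe that in this regime the numerator bracket of~(\ref{eq:Binenergy}) is always $O(M)$: since $\sin2\theta=2\sin\theta\cos\theta$ one has $\bigl|\tfrac{|b|^2}{4}\tfrac{\sin2M\theta\,\sin2\theta}{\sin^2\theta}\bigr|\le\tfrac{|b|^2}{2|\sin\theta|}$, and $x'\asymp1$ forces $\sin\theta\sim x'\pi/M$, so the bracket $(|b|^2+|a|^2\sin^2\theta)M-\tfrac{|b|^2}{4}\tfrac{\sin2M\theta\,\sin2\theta}{\sin^2\theta}$ is $O(M)$. It then remains to lower-bound $D$. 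One always has $\sin^2\theta\gtrsim M^{-2}$; and the key point is that if $x'$ is a natural number then $\sin M\theta'=\pm\sin(M^{-\alpha})$, so $\sin^2M\theta\asymp M^{-2\alpha}$ and hence $D\gtrsim M^{-2}+M^{-2\alpha}\asymp M^{-2\min\{1,\alpha\}}$; combining, $\mathcal E_M(\omega)=O(M)\cdot O(M^{2\min\{1,\alpha\}})=O(M^{1+2\min\{1,\alpha\}})$, which is $O(M^3)$ for $\alpha\ge1$ and $O(M^{1+2\alpha})$ for $0\le\alpha<1$. If instead $x'$ is not a natural number, then $\sin M\theta'\to\sin(x'\pi)\ne0$, so $D$ is bounded below and the same estimate gives $\mathcal E_M(\omega)=O(M)$.

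There is no deep obstacle — the theorem packages the Case~(ii) analysis — so the hard part is disciplined bookkeeping. The two places most likely to trip one up are: the exponent crossover at $\alpha=1$, where one must check that $\max\{M^{-2},M^{-2\alpha}\}$ switches which term controls $D$ exactly so as to produce the dichotomy ``$\alpha\ge1$ versus $0\le\alpha<1$'' (with $\alpha=1$ landing in the $O(M^3)$ case); and the endpoint $\alpha=0$, where $\sin M\theta'=\pm\sin1\ne0$, so that the natural-number branch there must be seen to collapse to $O(M^{1+0})=O(M)$, consistently with viewing that same case under the ``$\theta_*\notin\mathbb Z\pi$'' bullet with $\theta_*=x'\pi+1$. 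One should also check that the ``otherwise'' bound is not vacuous: $1-\sin(2\theta_*)/(2\theta_*)>0$ for every $\theta_*>0$ (as $\sin t/t<1$ for $t\ne0$), so there in fact $\mathcal E_M(\omega)\asymp M$.
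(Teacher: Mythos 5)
Your argument is correct and follows essentially the paper's own route: the theorem is just the Case~(ii) analysis of (\ref{eq:Binenergy}) under the substitution $M\theta'=\theta_*+\epsilon$ with $\theta_*=x'\pi$ and $\epsilon=\pm M^{-\alpha}$, and your numerator bound $O(M)$ (via $|\sin 2M\theta\,\sin2\theta|/\sin^2\theta\le 2/|\sin\theta|$) combined with the denominator bound $D\gtrsim M^{-2\min\{1,\alpha\}}$ reproduces the three cases, with somewhat more honest bookkeeping than the paper's $\sim$-estimates. The only point neither you nor the paper handles is the corner $\alpha=0$ with $x'\notin\mathbb{N}$ but $x'\pi\pm1\in\mathbb{Z}\pi$, where your claim that $\sin M\theta'\to\sin(x'\pi)\neq 0$ fails and the ``otherwise'' bound $O(M)$ can break down; for $\alpha>0$ the argument is complete.
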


\begin{table}[h]
 \begin{tabular}{c||c|c|c}
 & $1 \ll M \ll 1/\theta$ & $1 \ll M\asymp 1/\theta$  & $1/\theta\ll M$ \\ \hline\hline 
$\omega\in \partial B$ &  - & - & $O(M)$ \\ \hline
$\omega \in B_{out}$ &  \multicolumn{2}{|c|}{$O(M)$} & $\begin{cases}O(\theta^{-1}) & \text{: $1/\theta \gg 1$}\\ O(1) & \text{: $1/\theta\asymp 1$}\end{cases}$ \\ \hline 
$\omega \in B_{in}$ & $O(M)$  & 
\multicolumn{2}{c}{\small $\begin{cases} 
O(M^3/{\theta_*'}^2) & \text{: $\theta_*\in \mathbb{Z}\pi$, $\epsilon M\ll 1$}\\
O(M\epsilon^{-2}) & \text{: $\theta_*\in \mathbb{Z}\pi$, $\epsilon M\gg 1$}\\
O(M) & \text{: $\theta_*\notin \mathbb{Z}\pi$} 
\end{cases}$} 
 \end{tabular}
\caption{Asymptotics of the energy of $\mathcal{E}_M(\omega)$: $\cos\theta=(\omega+\omega^{-1})/(2|a|)$, $M\theta=\theta_*+\epsilon$. }
 \label{table:OrderEnergy}
\end{table}

\noindent\\
\noindent {\bf Acknowledgments}
H. M. was supported by the grant-in-aid for young scientists No.~16K17630, JSPS. 
E.S. acknowledges financial supports from the Grant-in-Aid of
Scientific Research (C) Japan Society for the Promotion of Science (Grant No.~19K03616) and Research Origin for Dressed Photon.



\begin{small}
\bibliographystyle{jplain}

\begin{thebibliography}{99}



\bibitem{Ambainis2003} 
A. Ambainis, \textit{Quantum walks and their algorithmic applications}, 
Int. J. Quantum Inf., {\bf 1} (2003), 507-518. 



\bibitem{Childs}
A. M. Childs, \textit{Universal computation by quantum walk}, Phys. Rev. Lett. {\bf 102} (2009), 180501.

\bibitem{FelHil1}
E. Feldman and M. Hillery, \textit{Quantum walks on graphs and quantum scattering theory}, Coding Theory and Quantum Computing, edited by D. Evans, J. Holt, C. Jones, K. Klintworth, B. Parshall, O. Pfister, and H. Ward, 
Contemporary Mathematics, {\bf 381} (2005), 71-96. 

\bibitem{FelHil2}
E. Feldman and M. Hillery, \textit{Modifying quantum walks: A scattering theory approach}, Journal of Physics A: Mathematical and Theoretical {\bf 40} (2007), 11319.  

\bibitem{FH}
R. P. Feynman and A. R. Hibbs, 
``Quantum Mechanics and Path Integrals", 
Dover Publications, Inc., Mineola, NY, emended edition  (2010).



\bibitem{H}
K. Higuchi, Feynman integral in quantum Walk, barrier-top scattering and Hadamard Walk, 
arXiv:2101.07617

\bibitem{HKSS1}
Yu. Higuchi, N Konno, I. Sato, E. Segawa,
Quantum graph walk I: mapping to quantum walks, 
Yokohama Mathematical Journal 59 (2013) pp.33--53

\bibitem{HS}
Yu. Higuchi and E. Segawa,
Dynamical system induced by quantum walks, 
Journal of Physiscs A: Mathematical and Theoretical \textbf{52} (2009), 395202. 



\bibitem{KawaiKomatsuKonno}
H. Kawai, T. Komatsu and N. Konno, 
\textit{Stationary measure for two-state space-inhomogeneous quantum walk in one dimension}, 
arXiv:1707.04040.

\bibitem{Parker}
P. Kuklinski, 
Conditional probability distributions of finite absorbing quantum walks, Physical Review A {\bf 101} (2020) 032309.

\bibitem{KKMS}
N. Konno, T. Komatsu, H. Morioka, E. Segawa, 
Generalized eigenfunctions for quantum walks via path counting approach, Rev. Math. Phys., {\bf 33} (2021), 2150019.

\bibitem{KKMS2}
N. Konno, T. Komatsu, H. Morioka, E. Segawa, \textit{Asymptotic properties of generalized eigenfunctions for multi-dimensional quantum walks}, preprint. arXiv.2104.00836


\bibitem{KomatsuKonno}
T. Komatsu and N. Konno, 
\textit{Stationary amplitudes of quantum walks on the higher-dimensional integer lattice}, 
Quantum Inf. Process., {\bf 16} (2017), 291.

\bibitem{Konno2008b} 
N. Konno, 
\textit{Quantum Walks}, In: Lecture Notes in Mathematics: {\bf 1954} (2008), 309-452, Springer-Verlag, Heidelberg.

\bibitem{Konno}
N. Konno,  
\textit{The uniform measure for discrete-time quantum walks in one dimension}, 
Quantum Inf. Process., {\bf 13} (2014), 1103-1125.

\bibitem{KonnoTakei}
N. Konno and M. Takei,
\textit{The non-uniform stationary measure for discrete-time quantum walks in one dimension},
Quantum Information Comutation, {\bf 15} (2015), 1060-1075.

\bibitem{MS4}
M. Maeda, H. Sasaki, E. Segawa, A. Suzuki, K. Suzuki, \textit{Scattering and inverse scattering for nonlinear quantum walks}, Discrete and Continuous Dynamical Systems A 38 (2018), 3835-3851.


\bibitem{MMOS}
K. Matsue, L. Matsuoka, O. Ogurisu and E. Segawa, 
\textit{Resonant-tunneling in discrete-time quantum walk},
Quantum Studies: Mathematics and Foundations {\bf 6} (2018), 35–44. 

\bibitem{MHS}
S. Mohamed, Yu. Higuchi, E. Segawa, 
Electric circuit induced by quantum walks, 
Journal of Statistical Physics 181 (2020) pp.603–617.


\bibitem{Morioka}
H. Morioka, 
\textit{Generalized eigenfunctions and scattering matrices for position-dependent quantum walks}, 
Rev. Math. Phys., {\bf 31} (2019), 1950019. 

\bibitem{MoSe}
H. Morioka and E. Segawa, \textit{Detection of edge defects by embedded eigenvalues of quantum walks}, Quantum Inf. Process., {\bf 18} (2019), 283.


\bibitem{Portugalbook}
R. Portugal, 
``Quantum Walk and Search Algorithms", 2nd Ed.,
Springer Nature Switzerland, 2018.

\bibitem{RST1}
S. Richard, A. Suzuki and R. Tiedra de Aldecoa, \textit{Quantum walks with an anisotropic coin I: spectral theory}, Lett. Math. Phys., {\bf 108} (2018), 331-357.

\bibitem{RST2}
S. Richard, A. Suzuki and R. Tiedra de Aldecoa, \textit{Quantum walks with an anisotropic coin II: scattering theory}, Lett. Math. Phys., First Online (2018), DOI:10.1007/s11005-018-1100-1.

\bibitem{Su}
A. Suzuki, \textit{Asymptotic velocity of a position-dependent quantum walk}, Quantum Inf. Process, {\bf 15} (2016), 103-119.

\bibitem{T}
G. Tanner, 
From quantum graphs to quantum random walks, Non-Linear Dynamics and
Fundamental Interactions, 
NATO Science Series II: Mathematics, Physics and Chemistry,
213 (2006), 69–87.

\bibitem{Ti}
R. Tiedra de Aldecoa, \textit{Stationary scattering theory for unitary operators with an application to quantum walks}, J. Funct. Anal., {\bf 279} (2020), 108704.

\end{thebibliography}

\end{small}

\end{document}